\DeclarePairedDelimiter\abs{\lvert}{\rvert}%
\DeclarePairedDelimiter\norm{\lVert}{\rVert}%
\DeclareMathOperator{\Span}{span}
\DeclareMathOperator{\Tr}{Tr}
\DeclareMathOperator{\diag}{diag}
\newtheorem{theorem}{Theorem}
\newtheorem{lemma}{Lemma}
\newtheorem{definition}{Definition}
\newtheorem{remark}{Remark}
\theoremstyle{nonumberbreak}
\newtheorem{proof}{Proof}
\crefname{cor}{Corollary}{Corollaries}
\crefname{obs}{Observation}{Observations}
\crefname{remark}{Remark}{Remarks}
\crefname{proof}{Proof}{Proofs}
\begin{document}


\title{Monogamy of correlations and entropy inequalities in the Bloch picture}
\author{Paul Appel}
\email{paul.jonas.appel@gmail.com}
\affiliation{
Institute for Quantum Optics and Quantum Information - IQOQI Vienna, Austrian Academy of Sciences, Boltzmanngasse 3, 1090 Vienna, Austria}
\author{Marcus Huber}%
\email{entangledanarchist@gmail.com}
\affiliation{
Institute for Quantum Optics and Quantum Information - IQOQI Vienna, Austrian Academy of Sciences, Boltzmanngasse 3, 1090 Vienna, Austria}
\author{Claude Kl\"ockl}%
 \email{claudio.kloeckl@reflex.at}
\affiliation{
Institute for Quantum Optics and Quantum Information - IQOQI Vienna, Austrian Academy of Sciences, Boltzmanngasse 3, 1090 Vienna, Austria}%
\affiliation{Institute of Computer Science, Masaryk University, Botanick\'{a} 68a, 60200 Brno, Czech Republic}

\date{\today}

\begin{abstract}
\noindent
We investigate monogamy of correlations and entropy inequalities in the Bloch representation. Here, both can be understood as direct relations between different correlation tensor elements and thus appear intimately related. 
To that end we introduce the \emph{split Bloch basis}, that is particularly useful for representing quantum states with low dimensional support and thus amenable to purification arguments.
Furthermore, we find dimension dependent entropy inequalities for the Tsallis 2-entropy.
In particular, we present an analogue of the strong subadditivity and a quadratic entropy inequality. These relations are shown to be stronger than subadditivity for finite dimensional cases.
\end{abstract}

\pacs{Valid PACS appear here}
\maketitle

\section{Introduction}
\noindent
This articles covers two important themes of quantum information: \emph{entropy inequalities} and \emph{monogamy relations}.
The first focus of our article are entropy inequalities.
Entropy has been described by many differing mathematical definitions, but essentially the entropy of a quantum state always describes the lack of knowledge about the respective system.
Entropy inequalities govern the way information about some constituents of the system determines our knowledge about the remainder of the system.
They could be broadly equated to the ``natural laws'' of information.
Typically entropy can be either formulated in a classical or quantum version. 
Throughout this article, we will be discussing the quantum case.
How well these laws are understood varies greatly depending on the exact mathematical formulation of entropy. 
The standard formulation of the Shannon entropy \cite{ShannonEntropy} as well as its quantum analogue the von Neumann entropy have been thoroughly classified \cite{YeungCone,PippengerCone}.
Many further parametrized families of entropies are known. The two most well-known single-parameter examples are the  R\'enyi $\alpha$-entropy \cite{RenyiEntropy}  and the Tsallis $q$-entropy \cite{TsallisEntropy}.
Recently, important progress was made on the family of the  R\'enyi $\alpha$-entropies \cite{MilanRenyi,rankineq}. 
This almost completed the description of the better-known entropies. 
For the Tsallis $q$-entropy there are, to the best of our knowledge, no complete classification results known.
Even though the whole family of $q$-entropies (sometimes also referred to as q-logarithms) is  well studied \cite{Yamanoqlog}, of relevance for the field of complex systems \cite{TsallisBook,HanelThurner,ThurnerTsallis} and in case of the Tsallis $2$-entropy frequently employed in quantum mechanics under the name \emph{linear entropy}.

Note that  a number of powerful no-go results are known, specifically regarding the impossibility of linear inequalities for R\'enyi entropies for $\alpha\neq 0,1$ \cite{MilanRenyi} and the unachievability of strong subadditivity for Tsallis $q$-entropies with $q \neq 1$ \cite{PetzVirosztek}.
The Tsallis $q$-entropies directly translate to R\'enyi $\alpha$-entropies, but the impossibility of linear inequalities for the R\'enyi $2$-entropy does not imply the same for the Tsallis $2$-entropy. 
Indeed, the Tsallis $2$-entropy is subadditive \cite{AudenaertQSA}. 
Nonetheless we are able to circumvent these no-go results in the form of dimension-dependent analogues to strong subadditivity and pseudo-additivity for the linear entropy.
These analogues hold where their dimension-independent counterparts do not. 
We can hence trade dimension dependency for a broader range of applicability.
In particular we introduce a linear and another non-linear entropy inequality for the Tsallis 2-entropy.

The second focus of this article concerns correlations. 
In quantum physics some correlations exhibit a property commonly referred to as monogamy \cite{Terhal}.
Monogamy intuitively means that whenever two parties share a sufficient amount of monogamous correlations it prohibits a third party from also being correlated in a non-trivial way to the former.
In other words we could see them as a ``natural law'' limiting correlated information in analogy to the way we introduced entropy inequalities before.
This simple idea is not only of interest when trying to understand the fundamental structure of quantum correlations, but it is also the base for the security proofs of quantum key distribution (see the security proof of either Lo-Chau's protocol \cite{LoChau} or the security proof \cite{ShorPreskill} of the BB84 protocol \cite{BB84}). 
Therefore monogamy of entanglement is an essential tool for one of the most mature practical applications from the field of quantum information.
Formalizing the intuitive idea of monogamy has been the subject of a long-standing debate within the quantum information community \cite{CKW,OsborneVerstraete,Def:SquashedEntanglement,KoashiWinter}.

Let us review a few well-known facts about monogamy and motivate the introduction of dimension-dependent factors.
It is noteworthy that the \emph{strict} classical definition of monogamy, e.g. $\mu_{AB|C}\left(\rho_{ABC}\right)\geq \mu_{A|C}\left(\rho_{AC}\right)+\mu_{B|C}\left(\rho_{BC}\right)$ where $\mu$ is an entanglement measure in arbitrary dimension,  is intimately connected to the notion of entanglement:\\
It has been shown, that \emph{only} entanglement measures, in contrast to measures of other correlations, can be \emph{strictly} monogamous in arbitrary dimension \cite{AdessoPiani}.
On the other hand, even though many relations have been found to fulfill this definition for \emph{qubits}, e.g. the Coffman-Kundu-Wootters (CKW) inequality \cite{CKW}, and even though a generalization for $n$-qubit case \cite{OsborneVerstraete} is available, there exists only one known entanglement measure which fulfills this strict notion of monogamy in \emph{arbitrary dimension}: Squashed entanglement \cite{KoashiWinter}.
Unfortunately squashed entanglement is hardly computable; even numerically.
Many other entanglement measures can not be readily generalized to arbitrary dimension, e.g. the CKW inequality \cite{CounterExampleOu,CounterExampleMarcus}.
Even though common monogamy inequalities do not explicitly feature dimension dependent factors, they are indeed implicitly dimension dependent, i.e. assuming particular local dimensions.
Many arguments concerning certain aspects of monogamy are dependent on these assumptions:
For example, the maximal entanglement of two parties excludes a third party to be entangled with either of the former \emph{only} if the systems of the former have the \emph{same} local dimension. 
Furthermore, it has been shown that in general entanglement measures that quantify maximal entanglement geometrically faithful cannot be monogamous in asymptotic dimensions \cite{CounterExampleMarcus};
this is true also for squashed entanglement.

Considering all these problems in finding feasible entanglement measures which fulfill the strict monogamy relation in arbitrary dimension, it is sensible to relax this strict condition, e.g. \cite{Gour2017} or by introducing dimension-dependent factors to find analogue relations, as suggested in \cite{CounterExampleMarcus}.
These analogues might not fulfill the \emph{strict} notion of monogamy but capture the main idea of monogamy: That the correlations within one marginal restrict the correlations of all other possible marginals with the former.
With this in mind, we define a correlation monotone which, even though it does not fulfill the \emph{strict} monogamy relation, still complies with useful (quasi)-monogamy relations.
Even though this (quasi)-monogamy relation is weaker than then CKW inequality for qubits, it provides a dimension-dependent inequality for \emph{arbitrary} dimension.
Furthermore we can reproduce a well known result for qubit systems in arbitrary dimension: We show that if two parties with systems of \emph{equal} dimension are maximally entangled, no other party can be entangled with the composite system.
Additionally if two parties are \emph{not maximally} entangled but share some entanglement we can still provide non-trivial bounds on the shared correlations with any other party.

That the introduction of dimensional factors enlarges the applicability of monogamy and entropy relations is in itself a noteworthy observation already made in \cite{CounterExampleMarcus}, in this article we are however able to contribute entropy and (quasi-)\allowbreak monogamy relations.
Let us stress another important point:
Even though both areas of research, entropy inequalities and monogamy relations, are usually considered as two distinct subfields, 
there  exists an intimate connection, which becomes clear once we identify them as particular instances of the marginal problem.
The marginal problem is solved if for a given set of marginals all complementary sets of marginals, which make up a physical state, are identified \cite{klyatchko}.
Since both entropy inequalities as well as monogamy relations depend on functions of the marginals, it is not surprising that a connection between them can be made.

This connection becomes particularly obvious by using the correlation tensor of the generalized Bloch representation \cite{Bloch46,KimuraBloch,KimuraBloch2,BertlmannKrammer}.
Another important advantage of using the Bloch vector representation is the fact, that it naturally introduces dimension-dependent factors, which we show to be useful to circumvent certain no-go-theorems.
Even though the Bloch picture is an extremely well-established subject, it attracts continued interest up to this day. 
Most notably the proof of the long standing non-existence conjecture of the absolutely maximally  entangled (AME) states for seven qubit systems \cite{JensAME1} including a subsequent connection to coding theory and a classification of higher dimensional AME states \cite{JensAME2} but also applications in  entanglement detection \cite{HassanJoag,VicenteHuber,ClaudeMarcusCortensor} and quantum thermodynamics \cite{NicoMartiMarci}. 
In this work we are able to contribute to the Bloch formalism with the introduction of the \emph{Split Bloch basis}, an orthogonal operator basis, which allows a concise description of states with low dimensional support in high-dimensional Hilbert spaces.
The use of this basis is helpful in the handling of multipartite systems with different local dimensions, since it allows to easily define functions which are invariant under isometric transformations.
It is our hope that this article  sparks further discussions over the merits of the generalized Bloch decomposition. 

This article is structured as follows: 
In \cref{sec:corr_ten} we repeat the basic concepts of the generalized Bloch decomposition and the correlation tensor formalism.
Building on these concepts we  introduce the split Bloch basis in \cref{sec:split_bloch}.
The split Bloch basis is an orthogonal operator basis, similar to the generalized Bloch basis.
Using this novel tool we define a correlation monotone, which is convex and invariant under isometry transformations.
Subsequently several (quasi-)monogamy relations are found (\cref{Sec:Monogamy}).
Then we turn our attention towards the second part of the paper: entropy inequalities:
In \cref{Sec:EntropyInequalities} we explain why the Bloch decomposition relates the linear entropy of a system to the correlation tensor of the aforementioned system.
Following up we briefly review the state of the art on entropy inequalities of the R\'eyni and Tsallis type
and go on to introduce a novel linear inequality for the linear (or Tsallis 2-)entropy. 
In \cref{Sec:NonlinEntropyInequalities} we introduce a new quadratic entropy relation for the linear entropy: 
the \emph{generalized} pseudo-additivity. 
It originates from the pseudo-additivity of the q-entropies, but is applicable to \emph{all} states instead of only product states as in the case of pseudo-additivity. 
In the remainder of the section we show that this relation is independent of the known subadditivity \cite{AudenaertQSA} and visualize and discuss the body of allowed entropies for tripartite systems.
Finally, we sum up our results and remark on open questions in \cref{sec:Conclusion}.

\section{Intro: The Correlation Tensor Formalism\label{sec:corr_ten}}
\noindent
In this section we will shortly introduce the concepts and notions  of the traditional (generalized-) Bloch decomposition \cite{Bloch46},\cite{BertlmannKrammer} and the correlation tensor formalism and continue to introduce the \emph{split Bloch basis}, which tackles the problem of expressing low-dimensional quantum states in high-dimensional Hilbert spaces.

Let us start by stating the traditional Bloch decomposition:
\begin{definition}
\label{def:single_bloch}
A single partite (qudit) quantum state $\rho \in \mathcal{H}^d$ can always be written in the (generalized-)Bloch decomposition as defined in \cite{BertlmannKrammer} by:
\begin{align}
   \rho
   &=
   \frac{1}{d}
    \left(
        \mathds{1}_{d}
        + \sum_{i=1}^{d^2-1} 
            \left\langle 
                        \lambda_{i}
            \right\rangle   
            \lambda_{i}
    \right)
    \end{align}
    with $\lambda_{i}$ being orthogonal, traceless  (and canonically assumed hermitian $\lambda_i=\lambda_i^\dagger$) matrices and $d$ being the dimension of the Hilbert space of $\rho$.
    \end{definition}
\begin{remark}
Note that the Hilbert space we consider is a Hilbert-Schmidt space $\mathcal{H}^d = (\mathds{C}^d)^*\otimes\mathds{C}^d$ with the associated scalar product $\left\langle A|B \right\rangle=\Tr\left( B^\ast A\right)$.
\end{remark}
In large dimensions there is some freedom in the choice of $\lambda_{i}$ with the canonical choices being either the generalized Gell-Mann \cite{GGM,BertlmannKrammer} matrices or the (non-hermitian) Heisenberg-Weyl \cite{Weyl27} matrices, alongside more unusual choices like the Heisenberg-Weyl observables \cite{HWO}. 
The single party case can be naturally extended to multi-partite systems by a tensor product construction.
\begin{definition}
\label{def:n_bloch}
Any $n$-partite quantum state $\rho_\Sigma\in \mathcal{H}^{d_\Sigma}=\otimes_{i=1}^n\mathcal{H}^{d_{i}}$ where $\Sigma:=\{\Sigma_1,\Sigma_2,\cdots,\Sigma_n\}$ is the set of all parties can always be represented as
    \begin{align}
    \rho_\Sigma
    &=
    \frac{1}{d_\Sigma}
    \left(
        \sum_{i_1=0}^{d_1^{2}-1} \dots \sum_{i_n=0}^{d_n^{2}-1} 
            \left\langle 
                        \lambda_{i_1}^{\Sigma_1}\otimes\dots \otimes\lambda_{i_n}^{\Sigma_n}
            \right\rangle   
            \lambda_{i_1}^{\Sigma_1}\otimes\dots \otimes\lambda_{i_n}^{\Sigma_n}
    \right)
\end{align}
Here one canonically uses a basis comprised of tensor products of local hermitian Bloch bases, which then allow for a local Bloch vector decomposition.
This means the $\lambda_{i_j}^{\Sigma_j}$ are again orthogonal, e.g. $\Tr\left(\lambda_{m_j}^{\Sigma_j}\lambda_{n_j}^{\Sigma_j}\right)=d_j \delta_{mn}$, 
where $d_j$ is the dimension of the local Hilbert space $\mathcal{H}^{d_j}$; note that $d_\Sigma=\prod_{j=1}^n d_j$.
Furthermore all $\lambda_{i_j}^{\Sigma_j}$ are traceless, except for $\lambda_{0_j}^{\Sigma_j}=\mathds{1}_{d_j}$.
\end{definition}

The advantage is that the Bloch components divide into an intuitive set of correlation tensors.
For example in tripartite systems this leads to three local Bloch vectors, three correlation matrices and a single  three-body correlation tensor.
The Bloch representation furthermore has the advantage of giving an economical form for $\Tr(\rho^{2})$ through the tracelessness of $\lambda_{i}$.

\begin{definition}[Correlation Tensor]
\label{def:corr_ten}
Let $\rho_\Sigma$ be an $n$-partite state:
\begin{enumerate}[(i)]
    \item 
        The correlation tensor of a state $T(\rho_\Sigma)$ is the generalization of the Bloch vector in the qubit Bloch decomposition, it collects all coordinates of the operator basis:
        \begin{align}
            [T(\rho_\Sigma)]_{i_1,i_2,\cdots,i_n}
            &:=
             \left\langle 
                                \lambda_{i_1}^{\Sigma_1}\otimes\dots \otimes\lambda_{i_n}^{\Sigma_n}
                    \right\rangle
                    &
                    1 \leq i_j < d^2_j
\\
   \shortintertext{ \item
        We can now define lower order correlation tensors as:}
        T(\rho_v)
         :=
        T^{v}
        \end{align}
        Where $v\subseteq\Sigma $ runs over all non-empty subsets of $\Sigma$, i.e. its powerset $P(\Sigma)\setminus \emptyset$ and $\rho_v=\Tr_{\bar{v}}\left(\rho_\Sigma\right)$ is the state of the subsystem of $v$ given by taking the partial trace over its complement $\bar{v}$, i.e. $v$ and $\bar{v}$ are a partition of $\Sigma$.
\end{enumerate}
\end{definition}

Let us state now a very useful lemma:
\begin{lemma}
\label{lem:trace_squared}
If $\rho_{\Sigma}$ denotes a n-partite system owned by a set of parties $\Sigma=\{\Sigma_1,\dots,\Sigma_n\}$ with dimension $d_\Sigma=\prod_i d_{i}$,
we can write $\Tr\left(\rho_\Sigma^{2}\right)$ as :
\begin{equation}
\Tr(\rho_\Sigma^{2})=\frac{1}{d_\Sigma}\left(1+\sum_{\forall v \in P(\Sigma)} \left\| T^{v} \right\|^{2} \right).
\end{equation}
Where $v\subseteq\Sigma $ runs over all non-empty subsets of $\Sigma$, e.g. its powerset $P(\Sigma)\setminus \emptyset$. 
$\sum_v\norm{T^{v}}^2$ is the sum over correlation tensors as defined above (\cref{def:corr_ten}).
\end{lemma}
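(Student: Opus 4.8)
The plan is to expand $\rho_\Sigma$ in the tensor-product Bloch basis of \cref{def:n_bloch}, compute the Hilbert--Schmidt norm $\Tr(\rho_\Sigma^2)$ directly, and then reorganise the resulting sum of squared coefficients according to which parties carry a nontrivial (traceless) generator. First I would abbreviate the basis operators as $\Lambda_{\vec{\imath}}:=\lambda_{i_1}^{\Sigma_1}\otimes\dots\otimes\lambda_{i_n}^{\Sigma_n}$ with multi-index $\vec{\imath}=(i_1,\dots,i_n)$, so that $\rho_\Sigma=\tfrac{1}{d_\Sigma}\sum_{\vec{\imath}}c_{\vec{\imath}}\Lambda_{\vec{\imath}}$ with coefficients $c_{\vec{\imath}}=\langle\Lambda_{\vec{\imath}}\rangle=\Tr(\Lambda_{\vec{\imath}}\rho_\Sigma)$, which are real because the $\Lambda_{\vec{\imath}}$ are hermitian. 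The key computational input is that the local orthogonality $\Tr(\lambda_{m_j}^{\Sigma_j}\lambda_{n_j}^{\Sigma_j})=d_j\delta_{mn}$ tensorises to $\Tr(\Lambda_{\vec{\imath}}\Lambda_{\vec{\jmath}})=d_\Sigma\,\delta_{\vec{\imath}\vec{\jmath}}$.

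Feeding this into $\Tr(\rho_\Sigma^2)=\tfrac{1}{d_\Sigma^2}\sum_{\vec{\imath},\vec{\jmath}}c_{\vec{\imath}}c_{\vec{\jmath}}\Tr(\Lambda_{\vec{\imath}}\Lambda_{\vec{\jmath}})$ collapses the double sum onto the diagonal, leaving $\Tr(\rho_\Sigma^2)=\tfrac{1}{d_\Sigma}\sum_{\vec{\imath}}c_{\vec{\imath}}^2$. The all-zero multi-index contributes $c_{\vec 0}=\langle\mathds{1}_{d_\Sigma}\rangle=\Tr(\rho_\Sigma)=1$, which I would pull out to produce the leading $1$ in the claimed identity; every remaining multi-index has at least one nonzero entry.

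The heart of the argument is the bookkeeping step: to each nonzero multi-index I assign its support $v:=\{k:i_k\neq 0\}\subseteq\Sigma$ and partition the remaining sum by this support. For a fixed nonempty $v$, the operator $\Lambda_{\vec{\imath}}$ carries the identity on every party outside $v$, so partial-tracing those parties turns $c_{\vec{\imath}}=\Tr(\Lambda_{\vec{\imath}}\rho_\Sigma)$ into $\Tr\big[\big(\bigotimes_{k\in v}\lambda_{i_k}^{\Sigma_k}\big)\rho_v\big]$ with $\rho_v=\Tr_{\bar v}(\rho_\Sigma)$. By \cref{def:corr_ten} this is precisely the component $[T^v]_{\{i_k\}_{k\in v}}$, hence $\sum_{\operatorname{supp}(\vec{\imath})=v}c_{\vec{\imath}}^2=\norm{T^v}^2$. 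Summing over all nonempty $v\in P(\Sigma)$ then recovers $\sum_{\vec{\imath}\neq\vec 0}c_{\vec{\imath}}^2=\sum_v\norm{T^v}^2$, which establishes the identity.

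I expect the main obstacle to be organisational rather than analytical: making the partition-by-support rigorous and confirming that, after the partial trace, the coefficients indexed by support exactly $v$ coincide with the correlation tensor of the marginal $\rho_v$ and not merely with a restriction of the global coefficients. Tracelessness of the $\lambda_{i_j}^{\Sigma_j}$ for $i_j\neq 0$ is exactly what prevents any cross-contamination between distinct supports under the partial trace, so I would invoke it explicitly at that step.
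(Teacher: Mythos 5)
Your proposal is correct and follows essentially the same route as the paper's own proof: expand $\rho_\Sigma$ in the tensor-product Bloch basis, use the orthogonality relation $\Tr\left(\Lambda_{\vec{\imath}}\Lambda_{\vec{\jmath}}\right)=d_\Sigma\,\delta_{\vec{\imath}\vec{\jmath}}$ to reduce $\Tr(\rho_\Sigma^2)$ to $\frac{1}{d_\Sigma}\sum_{\vec{\imath}}\langle\Lambda_{\vec{\imath}}\rangle^2$, split off the all-identity term, and group the remaining squared coefficients by the support $v$ of the multi-index into $\sum_v\norm{T^v}^2$. Your version merely spells out the partition-by-support bookkeeping (and the marginal-compatibility of the coefficients via partial trace) that the paper leaves implicit in its one-line computation.
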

\begin{proof}
The lemma follows directly from the definition of the basis, i.e. since 
$Tr\left(
    (\lambda_{i_1}^{\Sigma_1}\otimes\dots \otimes\lambda_{i_n}^{\Sigma_n})
    (\lambda_{i_1'}^{\Sigma_1}\otimes\dots \otimes\lambda_{i_n'}^{\Sigma_n})\right)=\delta_{i_1,i_1'}\dots\delta_{i_{n},i_n'}d_\Sigma $ 
we find 
\begin{align}
\Tr\left(\rho_\Sigma^2\right)
=\frac{d_\Sigma}{d_\Sigma^2}
\left( 
    \sum_{i_1=0}^{d_1^{2}-1} \dots \sum_{i_n=0}^{d_n^{2}-1} 
        \left\langle 
                \lambda_{i_1}^{\Sigma_1}
                \otimes\dots \otimes
                \lambda_{i_n}^{\Sigma_n}
        \right\rangle^2
\right)
=
\frac{1}{d_\Sigma}
\left( 
    1
    +\sum_v \left\| T^{v} \right\|^{2}
\right)
\end{align}
\end{proof}

\section{The Split Bloch Basis\label{sec:split_bloch}}
\noindent
After having repeated the basic concept of the traditional Bloch decomposition, we proceed now to extend this notion to the \emph{split Bloch decomposition} and motivate its introduction:
Let us consider the scenario where a low dimensional state is represented in a Bloch basis of a Hilbert space with much larger dimension.
In this case the representation is unnecessarily involved.
For example, consider the non-normalized state $\rho=\diag(1,1,0,0)\in  \mathcal{H}^4$, where $\diag$ is the function which maps a vector to a matrix with the vectors entries on the diagonal.
Recall the set of operators, that are typically associated with the generalized Bloch decomposition \cite{BertlmannKrammer}:
$
\mathcal{H}^d=   \Span\left(  \mathds{1}_d,\lambda_{1},...,\lambda_{d^2-1} \right) 
$
where $\mathds{1}_d$ is the identity on the space and $\lambda_{i}$ are required to be traceless and normalized.

The identity is independently of the choice of basis always a part of the set of operators. 
Returning to the example, we see that the use of the identity is sub-optimal, since the additional ones on the diagonal have to be compensated by additional diagonal basis elements.
This simple example already demonstrate a deeper problem concerning canonical Bloch decompositions, i.e. the difficulty to describe \emph{subspaces} of a Hilbert space in a concise way.
This problem becomes more apparent for multi-partite states which have different local dimensions:
We will later consider a cryptographic scenario where Alice and Bob want to communicate securely, meaning they want to exclude a malicious third party (Eve) to extract any information about the content of their communication.
While we can assume the knowledge about the local dimension of Alice and Bob, the local dimension of Eve is arbitrary.
To treat this problem, a separation into different subspaces is very helpful (compare \cref{thm:2-part-less-3-part}).

Now we can ask ourselves: \emph{Is there a Bloch basis which allows a concise description of subspaces? }
The answer is the \emph{split Bloch basis}.
The idea of the basis is to find a complete set of basis elements of a Hilbert space, which can be divided into subsets which span the subspaces.
In the following we will define an orthogonal operator basis similar to the generalized Gellmann matrices.
Recall, that the generalized Gellmann basis for a $d$-dimensional Hilbert space consists of the identity, $d^2-d$ elements with only \emph{off-diagonal} entries and $d-1$ traceless elements with only \emph{diagonal} entries.
The main difference between the split Bloch basis and the Gellmann basis is that we will replace the identity $\mathds{1}$  with two sub-identities $\bar{\mathds{1}} $ and the diagonal elements with a set of $d-2$ diagonal elements.
The off-diagonal elements need not to be replaced since they already divide into two subsets which live in either subspace.

Let us first describe a operation $\bm{\odot}$ which describes the split into different subspaces:
\begin{definition}
Given a Hilbert-Schmidt space $\mathcal{H}^d=\Span\left(\lambda_0,\dots,\lambda_{d^2-1}\right)$ with the customary Hilbert-Schmidt scalar product, i.e. $\left\langle A|B\right\rangle=\Tr(B^{\ast}A)$ can always be divide into subspaces such that:
\begin{align}
    \mathcal{H}^d 
    =\mathcal{H}^c\bm{\odot} \mathcal{H}^{d-c}
    :=&\left(\mathds{C}^c\oplus\mathds{C}^{d-c}\right)^{\ast}\otimes\left(\mathds{C}^c\oplus\mathds{C}^{d-c}\right)
    \\
    =& \;
        \mathcal{H}^c
        \oplus
        \mathcal{H}^{d-c}
        \oplus
         \left(
            \left(\mathds{C}^{d-c}\right)^*
            \otimes
            \mathds{C}^c
        \oplus
            \left(\mathds{C}^{c}\right)^*
            \otimes
            \mathds{C}^{d-c}
        \right)
    \\
    =&\,
        \Span\left(\Lambda_c\right)
        \oplus
        \Span\left(\Lambda_{d-c}\right)
        \oplus
        \Span\left(\Sigma\right)
\end{align}
Where $\Lambda_c$, $\Lambda_{d-c}$ and $\Sigma$ first have to be a partition of $\left\{\lambda_0,\dots,\lambda_{d^2-1}\right\}$ and second $\abs{\Lambda_c}=c^2$, $\abs{\Lambda_{d-c}}=(d-c)^2$  and $\abs{\Sigma}=2dc-2c^2$
Note that the direct sum in the third line is the \emph{internal} direct sum, since $\lambda_i \in \mathcal{H}^d \;\forall i$.
The internal is however isomorphic to the external direct sum. 
\end{definition}

\begin{definition}[The Split Bloch Basis]
Given a Hilbert-Schmidt space $\mathcal{H}^d$ with the customary Hilbert-Schmidt scalar product, i.e. $\left\langle A|B\right\rangle=\Tr(B^{\ast}A)$, 
we want to find a basis $\mathcal{B}_{split}=\{\lambda_i \,|\,\forall i\}\cup\{\omega_j \,|\,\forall j\} \cup\{\nu_k \,|\,\forall k\} $, such that $\mathcal{H}^d=\mathcal{H}^c\bm{\odot} \mathcal{H}^{d-c}$ and
$\lambda_i,\omega_j,\nu_k \in \mathcal{H}^d$ while $\mathcal{H}^c=\Span\left(\{\lambda_i \,|\,\forall i\}\right)$ and $\mathcal{H}^{d-c}=\Span\left(\{\omega_j \,|\,\forall j\}\right)$.
First we will define non-canonical operators, similar to the canonical generalized Gellmann matrices, then we use them to define a split Bloch basis: $\mathcal{B}_{split}$: 
\begin{enumerate}[(i)]
    \item 
Assuming the standard computational basis $\mathds{C}^d=\Span\left(\left|0\right\rangle,\dots,\left|d-1\right\rangle\right)$,
let us define a set of operators, which span the Hilbert space. 
It consist of: 
\\ 
 two sub-identities
\begin{align}
    \lambda_{00}
    &=
    \bar{\mathds{1}}_{c\phantom{-c}}
    = \sum_{l=0}^{c-1} 
        \left|
            l
        \right\rangle
        \left\langle
            l
        \right|
    \\
    \omega_{00}
    &=
    \bar{\mathds{1}}_{d-c}
    = \sum_{l=c}^{d-1} 
        \left|
            l
        \right\rangle
        \left\langle
            l
        \right|\; ,
    \\
    \shortintertext{$d-2$ diagonal elements:}
    \lambda_{kk}
    &=\sqrt{\frac{c}{k+k^2}}\left(
    \sum_{l=0}^{k-1} 
        \left|
            l
        \right\rangle
        \left\langle 
            l 
        \right|
    - 
    k \left|
            k
        \right\rangle
        \left\langle 
            k 
        \right|\right)
    & 1\leq k\leq c-1
    \\
 \omega_{kk}
    &=\sqrt{\frac{d-c}{k+k^2}}\left(
    \sum_{l=0}^{k-1} 
        \left|
            l+c
        \right\rangle
        \left\langle 
            l +c
        \right|
    - 
    k \left|
            k+c
        \right\rangle
        \left\langle 
            k+c
        \right|\right)
    & 1\leq k\leq d-c-1 
    \; ,
    \\
    \shortintertext{$\frac{d(d-1)}{2}$ symmetric off-diagonal elements }
    \lambda_{kl}
    &=   \sqrt{\frac{c}{2}}\left(      
        \left|
            k
        \right\rangle
        \left\langle 
            l 
        \right|
        +
        \left|
            l
        \right\rangle
        \left\langle 
            k
        \right|\right)
        & 0\leq k<l\leq c-1
    \\
    \omega_{kl}
    &=   \sqrt{\frac{d-c}{2}}\left(      
        \left|
            k
        \right\rangle
        \left\langle 
            l 
        \right|
        +
        \left|
            l
        \right\rangle
        \left\langle 
            k
        \right|\right)
         & c-1\leq k<l\leq d-1
    \\
    \nu_{kl}
    &= \sqrt{\frac{d-c}{2}}\left(        
        \left|
            k
        \right\rangle
        \left\langle 
            l 
        \right|
        +
        \left|
            l
        \right\rangle
        \left\langle 
            k
        \right|\right)
         &0\leq k\leq c-1<l\leq d-1
    \\
    \shortintertext{and $\frac{d(d-1)}{2}$ anti-symmetric off-diagonal elements}
        \hat{\lambda}_{kl}
    &=   \sqrt{\frac{c}{2}}\left(-\mathbf{i}      
        \left|
            k
        \right\rangle
        \left\langle 
            l 
        \right|
        + \mathbf{i}
        \left|
            l
        \right\rangle
        \left\langle 
            k
        \right|\right)
        & 0\leq k<l\leq c-1
    \\
        \hat{\omega}_{kl}
    &=   \sqrt{\frac{d-c}{2}}\left(-\mathbf{i}      
        \left|
            k
        \right\rangle
        \left\langle 
            l
        \right|
        + \mathbf{i}
        \left|
            l
        \right\rangle
        \left\langle 
            k
        \right|\right)
        & c-1\leq k<l\leq d-1
    \\
        \hat{\nu}_{kl}
    &=   \sqrt{\frac{d-c}{2}}\left(-\mathbf{i}      
        \left|
            k
        \right\rangle
        \left\langle 
            l
        \right|
        + \mathbf{i}
        \left|
            l
        \right\rangle
        \left\langle 
            k
        \right|\right)
        &0\leq k\leq c-1<l\leq d-1
\end{align}
Note that $\mathbf{i}$ does not stand for an index but rather $\mathbf{i}^2=-1$.
\item 
    With these elements we are able to find a basis $\mathcal{B}_{split}$ for $\mathcal{H}^d$:
    \begin{align}
        \mathcal{B}_{split} 
        &:=
        \left(
            \{\lambda_{kl}\,|\, \forall k,l \}\cup\{\hat{\lambda}_{mn}\,|\,\forall m,n  \}
        \right)
        \cup
         \left(
            \{\omega_{kl}\,|\, \forall k,l\}\cup\{\hat{\omega}_{mn} \,|\,\forall m,n \}
        \vphantom{\hat{\lambda}_{kl}}
        \right)
        \cup
         \left(
            \{\nu_{kl}\,|\, \forall k,l\}\cup\{\hat{\nu}_{mn} \,|\,\forall m,n \}
        \vphantom{\hat{\lambda}_{kl}}
        \right)
        \\
        &:= \{\lambda_i^\prime \,|\, 0 \leq i \leq c^2-1\} 
        \cup
        \{\omega_i^\prime \,|\, 0 \leq i \leq  (d-c)^2-1\}
        \cup 
        \{\nu_i^\prime \,|\, 0 \leq i \leq  2dc-2c^2-1\}
    \end{align}
    In the second line of the above equation we simplified the notation by concatenating the indices, we only require $\lambda^\prime_0=\lambda_{00}$ and $\omega^\prime_0=\omega_{00}$ all other indices may be assigned freely. 
    Note that the basis is split naturally into three parts. 
    Two of them span two Hilbert-Schmidt subspaces, i.e.
    $\mathcal{H}^{c}
        =
           \Span\left( \{\lambda^\prime_i \,|\, \forall i\}\right)
    $ 
    and
    $\mathcal{H}^{d-c}
        = 
          \Span\left(  \{\omega^\prime_i \,|\, \forall i \}\right)
        $.
    The third part spans a subspace which is however \emph{not} a Hilbert-Schmidt space since \emph{in general} it can not be written as a tensor product of $\mathds{C}^d$ with its dual, i.e. $\Span(\nu_i^\prime \,|\, 0 \leq i \leq  2dc-2c^2-1) \neq (\mathds{C}^d)^\ast \otimes \mathds{C}^d \;\forall d,c $.
    From now on we will use 
    ${\{\mu^i \,|\, 0 \leq i \leq  d^2-c^2-1\}}
    =  \{\omega_i^\prime \,|\, 0 \leq i \leq  (d-c)^2-1\}
        \cup 
        \{\nu_i^\prime \,|\, 0 \leq i \leq  2dc-2c^2-1\} $ with $\mu_0=\omega_o=\bar{\mathds{1}}_{d-c}$ and take $\lambda'_i \rightarrow \lambda_i \;\forall i$ for simplicity of notation.
    We will however mention when we use the split Bloch basis although it can be understood from context. 
    In fact, whenever $\mu_i$'s appear it should be understood that we are using the split Bloch basis.
    \end{enumerate}
\end{definition}
Note that we have chosen the normalization of the subspaces differently, i.e. $\Tr\left(\lambda_{i}\lambda_{j}\right)=\delta_{ij}\,c $ and $\Tr\left(\mu_{i}\mu_{j}\right)=\delta_{ij}\left(d-c\right)$. This ensures that the normalization of the subspace $\mathcal{H}^c$ is independent of the total dimension.

It is still necessary to show that we indeed constructed a orthogonal operator basis, e.g. that
all elements of the basis are orthogonal $\left\langle \lambda_i \right|\left.\lambda_{j\neq i}\right\rangle=0$, $\left\langle \mu_i \right|\left.\mu_{j\neq i}\right\rangle=0$, $\left\langle \lambda_i \right|\left.\mu_{j}\right\rangle=0$ and that they actually span $\mathcal{H}^d$.
This is however easy to see: 
    Any scalar product of pairs of operators with only off-diagonal elements is trivially zero, the same is true for any scalar product of off-diagonal and diagonal operators.
    Furthermore, any scalar product of diagonal operators in the separate subspaces is zero due to the tracelessness of all operators, except the (sub-)identity.
    Finally, any scalar product of any diagonal operators of the different subspaces is zero, due to the fact that they are non-overlapping block matrices.
    The fact that we have $d^2$ orthogonal elements already suffice to span $\mathcal{H}^d$, i.e. $c$ diagonal elements in $\mathcal{H}^c$, $d-c$ diagonal elements in $\mathcal{H}^{d-c}$ add up to the regular $d$ diagonal elements of the canonical Bloch basis of $\mathcal{H}^d$ and the off-diagonal are, apart from normalization, the same.
    
We now see that the split Bloch basis allows for a concise representation states with low-dimensional support in a $d$-dimensional system. For example, take $|\psi\rangle=\frac{1}{\sqrt{2}}(|0\rangle+|1\rangle)\in\mathds{C}^d$. This state has a split Bloch representation of $\langle\lambda_{01}\rangle=1$ being the only traceless Bloch matrix with  non-zero expectation value. The canonical Bloch representation would have to compensate the first entry $\lambda_0$ by additional diagonal Bloch matrices. In addition, keeping the squared trace constant it would also imply an increasing value of $\langle\lambda_{01}\rangle=\sqrt{2d}$ using the above normalization.

After having defined the split Bloch basis, we can repeat the definition of the correlation tensor (\cref{def:corr_ten}) and the lemma connecting the trace of the squared state with the Euclidean norm of the correlation tensor (\cref{lem:trace_squared}). 
First we describe only bipartite states in the split Bloch basis, the generalization is however straight-forward (see \cref{re:generalization}).
\begin{remark}
Given a bipartite state $\rho_{AB}\in \mathcal{H}^{d_A d_B}$ where the parts have \emph{different} local dimension; w.l.o.g. $d_A<d_B$.
We will use the split basis $\mathcal{B}_{split}$ to divide the bigger Hilbert space 
$\mathcal{H}^{d_{B}}=\mathcal{H}^{d_A}\odot \mathcal{H}^{d_B-d_A}$, such that the local dimension of one of the Hilbert-Schmidt subspaces coincides with the local dimension of the smaller space, i.e. $\mathcal{H}^{d_A d_B}= \mathcal{H}^{d_A}\otimes \left(\mathcal{H}^{d_A}\odot \mathcal{H}^{d_B-d_A }\right) $.
\begin{equation}
\rho_{AB }
=
    \sum_{i=0}^{d_A^{2}-1}
        \left(
            \frac{1}{d_A^{2}}
            \left(
                \sum_{j=0}^{d_A^{2}-1}
                    \left\langle 
                        \lambda^A_{i}\otimes\lambda^B_j\right
                    \rangle  
                    \lambda^A_{i}\otimes\lambda^B_j
            \right)
            +\frac{1}{\left(d_B-d_A\right)d_A}
            \left(
                \sum_{j=0}^{d_B^{2}-d_A^2-1}
                \left\langle 
                    \lambda^A_{i}\otimes\mu^B_j\right
                \rangle  
                \lambda^A_{i}\otimes\mu^B_j
            \right)
        \right)
        \label{obs:split_state}
\end{equation}
Note that only $\lambda^{A}_0$, $\lambda^{B}_0$ and $\mu^B_{0}$ have finite trace, all other elements are traceless.
Any state is properly normalized, i.e. $\Tr\left(\rho_{AB}\right)=1$.
Since the only elements with a finite trace are $\mathds{1}_{d_A}\otimes\bar{\mathds{1}}_{d_A}$ and $\mathds{1}_{d_A}\otimes\bar{\mathds{1}}_{d_B-d_A}$ we find:
\begin{align}
    \Tr\left(\rho_{AB}\right) 
    &=
    \frac{\left\langle
    		\mathds{1}_{d_A}
    		\otimes
    		\bar{\mathds{1}}_{d_A}
        	\right\rangle}
        {d_A^2}
    \Tr\left(
        \mathds{1}_{d_A}
        \otimes
        \bar{\mathds{1}}_{d_A}
        \right)
    +   \frac{\left\langle
        \mathds{1}_{d_A}
        \otimes \bar{\mathds{1}}_{d_B-d_A}
        \right\rangle}
        {\left(d_B-d_A\right)d_A}
    \Tr\left(
        \mathds{1}_{d_A}
        \otimes \bar{\mathds{1}}_{d_B-d_A}
        \right)
    \\
    &=
    \left\langle
        \mathds{1}_{d_A}
        \otimes
        \bar{\mathds{1}}_{d_A}
    \right\rangle 
    +
    \left\langle
        \mathds{1}_{d_A}
        \otimes
        \bar{\mathds{1}}_{d_B-d_A}
    \right\rangle
    =
    1
\end{align}
Note that in the standard Bloch decomposition the Bloch coefficient of the identity is fixed to one by normalization of the density matrix, in our case \emph{only the sum} of the Bloch coefficients of the partial identities adds up to one.
\end{remark}

Now let us investigate the correlation tensor formalism in the split Bloch basis; again we will use a bipartite state as an example:
\begin{definition}
\label{def:split_corr}
Given a bipartite state $\rho_{AB}\in \mathcal{H}^{d_A}\otimes \mathcal{H}^{d_B}$ with $d_B>d_A$ expressed in the split Bloch basis.
Now we split the correlation tensor  into parts :
        \begin{align}
            \left[T_{SD}(\rho_{AB})\vphantom{\tilde{T}}\right]_{ij} 
                &:=\left\langle 
                        \lambda_{i}^{A}\otimes\lambda_{j}^{B}
                  \right\rangle & \, i,j\in \{1,\dots,d_A^2-1\} 
            \\
            \left[\tilde{T}(\rho_{AB})\right]_{ij} 
                &:=\left\langle 
                        \lambda_{i}^{A}\otimes\mu_{j}^{B}
                  \right\rangle 
                & \, i\in \{1,\dots,d_A^2-1\} 
                \,;\,    j\in \{1,\dots,d_B^2-d_A^2-1\}
        \end{align}
$T^{SD}\left(\rho_{AB}\right)$ describes the correlations in $\mathcal{H}^{d_A}\otimes\mathcal{H}^{d_A} $ and $\tilde{T}\left(\rho_{AB}\right)$ describes the remaining correlations in $\mathcal{H}^{d_A}\otimes\mathcal{H}^{d_B}$.
The lower order correlation tensors are, in this case, just the local Bloch vectors.
\end{definition}
Finally we can find an expression for \cref{lem:trace_squared} in the split Bloch basis:

\begin{remark}
We can write the trace of a squared bipartite state  $\rho_{AB}\in \mathcal{H}^{d_A}\otimes\mathcal{H}^{d_B}$ of different local dimension, e.g. $d_{A}< d_B$:
\begin{align}
\Tr\left(\rho_{AB}^{2}\right)
=\frac{1}{d_A^2}
    \left(
        \left\langle
        \mathds{1}_{d_A}
        \otimes \bar{\mathds{1}}_{d_A}
        \right\rangle
        +\sum_{v \in P(\{A,B\})} \left\| T^{v}_{SD} \right\|^{2} 
    \right)
+\frac{1}{\left(d_B-d_A\right)d_A}
    \left(
        \left\langle
        \mathds{1}_{d_A}
        \otimes \bar{\mathds{1}}_{d_B-d_A}
        \right\rangle
        +\sum_{v \in \{B,AB\}} \norm{ \tilde{T}^{v}}^{2} 
        \right)
\end{align}
\end{remark}
As we have seen the split Bloch basis preserves the advantages of the Bloch basis.
However, it has additional advantages one of which is the possibility to express the (operator) Schmidt basis in terms of the split Bloch basis:
\begin{lemma}[The Schmidt decomposition]
\label{lem:Schmidt_decom}
For pure states $\left|\psi\right\rangle_{AB}\in \mathds{C}^{d_A}\otimes\left(\mathds{C}^{d_A}\oplus\mathds{C}^{d_B-d_A}\right)$ with $d_B>d_A$, there exists \emph{some} split Bloch basis such that we find the Schmidt decomposition as:
\begin{align}
 \left(
    \left|\psi\right\rangle
    \left\langle\psi\right|
 \right)_{AB}
    =  
    \sum_{i,j=0}^{\mathclap{\min(d_A,d_B)-1}}
        \left\langle
            \lambda^A_i
                \otimes
            \lambda^B_j 
        \right\rangle
        \lambda^A_i
        \otimes
        \lambda^B_j
\end{align}
where $\left\langle\lambda^A_0\otimes\lambda^B_0\right\rangle=1$. 
The \labelcref{pr:Schmidt_decom} of this lemma can be found in the appendix.
\end{lemma}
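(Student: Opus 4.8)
The plan is to exploit the single defining feature of a pure state that is not available for mixed states, namely that its Schmidt rank is bounded by $r\le\min(d_A,d_B)=d_A$, and to then \emph{use the freedom in choosing the split} to push all of the state into the $d_A$-dimensional block of $\mathcal H^{d_B}$.

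First I would write the ordinary vector Schmidt decomposition $\left|\psi\right\rangle=\sum_{k=0}^{r-1}\sqrt{p_k}\left|e_k\right\rangle_A\left|f_k\right\rangle_B$ with $r\le d_A$ and orthonormal families $\{\left|e_k\right\rangle\}\subset\mathds C^{d_A}$ and $\{\left|f_k\right\rangle\}\subset\mathds C^{d_B}$. The reduced state $\rho_B=\sum_k p_k\left|f_k\right\rangle\left\langle f_k\right|$ is then supported on the at most $d_A$-dimensional subspace $\mathrm{span}\{\left|f_k\right\rangle\}$. I would now fix the split $\mathcal H^{d_B}=\mathcal H^{d_A}\odot\mathcal H^{d_B-d_A}$ with $c=d_A$ and choose the computational basis defining the split Bloch basis so that this subspace is exactly the first block; concretely one sets $\left|f_k\right\rangle=\left|k\right\rangle$ for $k<r$ and completes to an orthonormal basis. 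Writing $P_1$ for the projector onto the first block, purity then gives $(\mathds 1_A\otimes P_1)\,\rho_{AB}\,(\mathds 1_A\otimes P_1)=\rho_{AB}$.

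The key computation is that this choice annihilates the $\tilde T$ part of \cref{def:split_corr}. By construction every generator $\mu^B_j$ of the split basis is either of $\omega$-type (acting inside the second block) or of $\nu$-type (purely off-block), so in both cases $P_1\mu^B_j P_1=0$. Inserting the projectors and using the support property yields $\langle\lambda^A_i\otimes\mu^B_j\rangle=\Tr\!\big(\rho_{AB}\,\lambda^A_i\otimes(P_1\mu^B_j P_1)\big)=0$ for all $i,j$, hence $\tilde T(\rho_{AB})=0$ and the whole state is carried by $T_{SD}$ inside the effective bipartition $\mathcal H^{d_A}\otimes\mathcal H^{d_A}$. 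On this effective system of \emph{equal} local dimension I would then invoke the ordinary bipartite Bloch decomposition of \cref{def:n_bloch} together with the operator Schmidt decomposition of the pure state: rotating the local split generators of the two $d_A$-blocks into the operator Schmidt frame makes the surviving correlation tensor collapse onto matched index pairs, which is exactly the stated form, the component $\langle\lambda^A_0\otimes\lambda^B_0\rangle=1$ arising from the pair of (sub-)identities.

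The main obstacle I anticipate is bookkeeping at the level of the basis rather than any deep inequality: one must check that the operator Schmidt frame of the reduced $d_A\times d_A$ state can actually be represented by an \emph{admissible} split Bloch basis (Hermitian generators, a single (sub-)identity per side with the remainder traceless and mutually orthogonal under $\Tr(\lambda_i\lambda_j)=d_A\delta_{ij}$), and that after this rotation the nonzero correlations are confined to the announced index range rather than spilling into the full Bloch index set. A secondary point to handle carefully is degeneracy of the Schmidt spectrum, which leaves residual unitary freedom on each block that must be fixed consistently so that the two local frames remain matched.
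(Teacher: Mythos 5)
Your strategy is sound, and its first half takes a genuinely different (and in one respect tighter) route than the paper's own proof. The paper works in the opposite direction: it starts from the Schmidt operator form $\sum_{i,j}c_ic_j\left|ii\right\rangle\left\langle jj\right|$, \emph{assumes} the computational basis underlying the split Bloch basis coincides with the Schmidt basis (justified only by the closing remark that any two such bases are related by a unitary), and then explicitly reconstructs each operator $\left|i\right\rangle\left\langle i\right|$ and $\left|ii\right\rangle\left\langle jj\right|+\left|jj\right\rangle\left\langle ii\right|$ as a combination of split-Bloch elements, with explicit coefficients $\alpha_{d-k}$ for the diagonal sector and the identities relating $\lambda_{ij}^A\otimes\lambda_{ij}^B$ and $\hat{\lambda}_{ij}^A\otimes\hat{\lambda}_{ij}^B$ for the off-diagonal sector. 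Your projector argument --- the rank of $\rho_B$ is at most $d_A$, choose the split so that its support lies in the first block, and then $\langle\lambda_i^A\otimes\mu_j^B\rangle=\Tr\left(\rho_{AB}\,\lambda_i^A\otimes P_1\mu_j^BP_1\right)=0$ because every $\omega$- and $\nu$-type generator is annihilated by conjugation with $P_1$ --- is correct and establishes rigorously precisely the part of the lemma that is actually invoked downstream (in the proof of \cref{thm:2-part-less-3-part} this is exactly the claim that $\Delta=0$), while making the existential clause ``there exists some split Bloch basis'' precise where the paper only gestures at unitary freedom.

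However, your step (4) is where essentially all of the paper's proof lives, and as worded it is not quite right. After confining the state to the $\mathcal{H}^{d_A}\otimes\mathcal{H}^{d_A}$ block, the correlation tensor does \emph{not} collapse onto matched index pairs: for a non-maximally entangled pure state the local Bloch vectors and the unmatched diagonal--diagonal correlations survive. For two qubits with $\left|\psi\right\rangle=c_0\left|00\right\rangle+c_1\left|11\right\rangle$ one has $\langle\mathds{1}\otimes Z\rangle=\langle Z\otimes\mathds{1}\rangle=c_0^2-c_1^2\neq0$. The pairing holds only in the off-diagonal sector, where $\left|ii\right\rangle\left\langle jj\right|+\left|jj\right\rangle\left\langle ii\right|$ is carried by the matched products $\lambda_{ij}^A\otimes\lambda_{ij}^B$ and $\hat{\lambda}_{ij}^A\otimes\hat{\lambda}_{ij}^B$, while the diagonal part $\sum_ic_i^2\left|ii\right\rangle\left\langle ii\right|$ generically fills \emph{all} products of diagonal elements (this is also how the lemma's concatenated-index sum has to be read). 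So your proposal is a correct reduction plus a correct identification of what remains, but the remaining verification --- that the Hermitian split-Bloch products reproduce the Schmidt operator form with this diagonal-full, off-diagonal-matched structure, which is exactly the admissibility check you flag as an anticipated obstacle --- is the substance of the paper's proof and still has to be carried out.
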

Let us finish this chapter with a short note on the $n$-partite generalization:

\begin{remark}[Generalization]
\label{re:generalization}
Until now, we have only considered bipartite states.
However, this choice is only due to the fact that the notation becomes very cumbersome in the $n$-partite scenario.
The generalization to multipartite states is straightforward:
Formally we can just repeat the step from \cref{def:single_bloch} to \cref{def:n_bloch}, i.e. replacing the Bloch basis by tensor products of \emph{local} Bloch bases, which can be split or not.
Note that the powerset construction we have chosen allows to scale the definitions concerning the split Bloch basis to $n$-partite systems easily by simply using the set of the $n$-parties.
\end{remark}

\section{Monogamy of Correlations from the Bloch Picture\label{Sec:Monogamy}}
\noindent
We proceed in this section by utilizing the correlation tensor formalism to derive some quasi-monogamy relations for (quantum-)correlations. 
The key idea is to focus on the relevant parts of the correlation tensor and drop all non-essential terms. 
This choice of a subset of the correlation tensor will lead to a dimension-dependent quasi-monogamy relation. 
In order to capture the correlations between two subsystems $A,B$ of a bipartite system  we define the following quantity to measure these correlations:
\begin{definition}[Correlation Monotone]
\label{def:mono}
Let $\rho_{AB}$ be an arbitrary bi-partite system and w.l.o.g. $d_A\leq d_B$, we define a correlation monotone $\mathcal{T}_{A|B}\left(\rho_{AB}\right)$:
        \begin{align}
        \mathcal{T}_{A|B}(\rho_{AB}) 
        :=
        \frac{1}{g_{A|B}} 
        \max_{\{\lambda_j^B\}}
        \left( \sum_{i=1}^{d_{min}^{2}-1}\sum_{j=1}^{d_{min}^{2}-1} 
                    \left\langle 
                            \lambda_{i}^{A}\otimes\lambda_{j}^{B}
                    \right\rangle^2 
        \right)
        \end{align}
Where $d_{min}=\min\left(d_A,d_B\right)$ is the minimum of the dimensions of the subsystems $A$ and $B$.
Furthermore, $\Tr\left(\lambda_{i}^{A}\lambda_{j}^{A}\right)=\Tr\left(\lambda_{i}^{B}\lambda_{j}^{B}\right)=\delta_{ij} \, d_{min}$.
Note that the maximization is obsolete if the local dimension of $\rho_A$ and $\rho_B$ is the same.
If the local dimensions are different the maximization is over arbitrary basis changes on the bigger subspace.
If $d_B> d_A$ then $\rho_B \in \mathcal{H}^{d_A}\odot \mathcal{H}^{d_B-d_A} $ will be described by the split Bloch basis such that one part of the split matches the local dimension of $\rho_A$.
The optimization is such that the correlations are maximal between the same dimensional spaces of $\rho_A$ and $\rho_B$. 
\end{definition}
$\mathcal{T}_{A|B}$ should be read as ``the correlations between the systems $A$ and $B$''. 
This monotone has a natural operational interpretation: For any given complete set of observables on $A$, it is the maximum amount of correlations achievable over any $d_A$-dimensional subspace of $B$. 
Due to the symmetric nature of the Schmidt decomposition, for pure states this coincides with the squared 2-norm of the correlation tensor, i.e. $\mathcal{T}_{A|B}\left(|\psi_{AB}\rangle\langle\psi_{AB}|\right)=\frac{1}{g_{A|B}}||T^{AB}||^2$.
The particular choice of normalization is left unspecified by intention. We will write in the following $g_{A|B}$ without further specification.
In principle the exact choice is a matter of taste.

Let us now state a useful lemma, which gathers some important properties of our monotone:
\begin{lemma}
\label{lem:traceoutlemma}
For a  general (possibly mixed) state $\rho_{A B E}$ be the following relations for the correlation monotone are true: 
\begin{enumerate}[label=(\roman*)]
    \item
    \begin{align}
       \mathcal{T}_{A|B} &= \mathcal{T}_{B|A}
       \\
       \shortintertext{\item and: }
    \frac{g_{A|B}}{g_{A|BE}} \mathcal{T}_{A|B}(\rho_{ABE}) &\leq \mathcal{T}_{A|BE}(\rho_{ABE})
 \end{align}
 \end{enumerate}
\end{lemma}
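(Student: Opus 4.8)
The plan is to treat the two parts separately: part (i) is a symmetry observation, while part (ii) carries the real content.

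For part (i), I would unfold \cref{def:mono} and note that the quantity being maximised, $\sum_{i,j=1}^{d_{min}^2-1}\langle\lambda_i^A\otimes\lambda_j^B\rangle^2$, is precisely the squared Hilbert--Schmidt norm of the restricted correlation matrix $M_{ij}:=\langle\lambda_i^A\otimes\lambda_j^B\rangle$. Exchanging the roles of $A$ and $B$ replaces $M$ by its transpose, and the squared Frobenius norm is transpose-invariant. Since $d_{min}=\min(d_A,d_B)$, the normalisation $\Tr(\lambda_i\lambda_j)=\delta_{ij}\,d_{min}$, the admissible set over which the maximisation runs (basis changes on whichever factor is larger), and the convention $g_{A|B}=g_{B|A}$ are all manifestly symmetric under $A\leftrightarrow B$, the two expressions coincide. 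This costs only a sentence or two of justification.

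For part (ii), I would first clear denominators: writing $\mathcal{T}_{X|Y}=Q_{X|Y}/g_{X|Y}$ with $Q_{X|Y}$ the bracketed maximum, the claim is equivalent to the $g$-independent statement $Q_{A|B}(\rho_{AB})\le Q_{A|BE}(\rho_{ABE})$ with $\rho_{AB}=\Tr_E\rho_{ABE}$. The natural route is to exhibit, for the basis $\{\lambda_j^B\}$ attaining $Q_{A|B}$, an admissible family on $BE$ reproducing at least the same correlations. The key ingredient is the partial-trace identity $\langle\lambda_i^A\otimes\lambda_j^B\rangle_{\rho_{AB}}=\langle\lambda_i^A\otimes(\lambda_j^B\otimes\mathds{1}_E)\rangle_{\rho_{ABE}}$, which follows directly from \cref{def:corr_ten} together with $\rho_{AB}=\Tr_E\rho_{ABE}$. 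I would then extend $\{\lambda_j^B\}$, together with $\mathds{1}_E$, to an orthogonal split-Bloch basis of $BE$ whose relevant block contains (suitably normalised versions of) the operators $\lambda_j^B\otimes\mathds{1}_E$, so that the sum defining $Q_{A|BE}$ for this \emph{particular} basis already contains the sum defining $Q_{A|B}$ plus further nonnegative squares; the maximisation over $BE$-bases then only increases the value, yielding the bound.

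The main obstacle lies exactly at this embedding step, in the normalisation and subspace bookkeeping. The convention $\Tr(\lambda\lambda)=\delta\,d_{min}$ is dimension dependent, and $\lambda_j^B\otimes\mathds{1}_E$ is supported on a $d_A d_E$-dimensional block of $BE$ rather than on a single admissible $d_A$-dimensional split block, so it is not literally one of the $\lambda_j^{BE}$ permitted in \cref{def:mono}. Making the embedding rigorous therefore requires either rescaling the extended operators—which introduces a factor of $d_E$—and showing that this factor is precisely the one absorbed by the prefactor $g_{A|B}/g_{A|BE}$, or replacing $\mathds{1}_E$ by an optimally chosen reference direction on $E$ and controlling the resulting loss through the maximisation. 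I expect verifying that these dimension-dependent factors cancel correctly against $g_{A|B}/g_{A|BE}$, and that no correlations are irretrievably lost when the extended operators are projected onto an admissible $d_A$-dimensional block, to be the delicate and decisive part of the argument.
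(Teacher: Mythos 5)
Your part (i) is correct and is essentially the paper's argument: the paper simply states that it follows from the definition, and your observation that the maximised sum is the squared Frobenius norm of the correlation matrix, which is invariant under transposition (i.e.\ under swapping $A\leftrightarrow B$), is exactly that.

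Part (ii) has a genuine gap, and it is rooted in how you read $\mathcal{T}_{A|BE}$. You apply \cref{def:mono} to the cut $A$ versus a \emph{monolithic} system $BE$, so that only a single block of $d_A^2-1$ operators $\mu_j$ on $\mathcal{H}^{d_Bd_E}$, normalised as $\Tr(\mu_i\mu_j)=\delta_{ij}d_A$, may enter the sum. Under that reading the claim is not merely hard to close, it is \emph{false}: take $\rho_{ABE}=\rho_{AB}\otimes\frac{1}{d_E}\mathds{1}_E$ with $\rho_{AB}$ maximally entangled and $d_A=d_B=d$. Then $\langle\lambda_i^A\otimes\mu_j\rangle=\frac{1}{d}\Tr\bigl[(\lambda_i^A)^T X_j\bigr]$ with $X_j=\frac{1}{d_E}\Tr_E(\mu_j)$, and since $\Tr(X_j^2)\le\frac{1}{d_E}\Tr(\mu_j^2)=\frac{d}{d_E}$ (Cauchy--Schwarz applied to the partial trace), every admissible $\mu_j$ satisfies $\sum_{i}\langle\lambda_i^A\otimes\mu_j\rangle^2\le\frac{1}{d}\Tr(X_j^2)\le\frac{1}{d_E}$, giving at most $(d^2-1)/d_E$ in total --- strictly below $\norm{T^{AB}}^2=d^2-1$ whenever $d_E>1$. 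Since the inequality is equivalent to its unnormalised version, the $g$'s cancel and no choice of $g_{A|B}$, $g_{A|BE}$ repairs this; the embedding obstacle you flagged is fatal for this route, not bookkeeping. The idea you are missing is that the paper defines the monotone differently when one side of the cut is composite: by the remark at the end of \cref{Sec:Monogamy}, $g_{A|BE}\,\mathcal{T}_{A|BE}$ is the (maximised) sum of \emph{all} cross correlation tensor norms, $\norm{T^{AB}}^2+\norm{T^{AE}}^2+\norm{T^{ABE}}^2$, i.e.\ over all subsets carrying at least one index on each side of the cut. With that definition the paper's proof is one line: $\norm{T^{AB}}^2\le\norm{T^{AB}}^2+\norm{T^{AE}}^2+\norm{T^{ABE}}^2$, a monotonicity under adding nonnegative terms that survives the maximisation over bases, and dividing by $g_{A|BE}$ gives exactly the claim.
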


\begin{proof}
\begin{enumerate}[label={\text{add} (\roman*):}, wide, labelwidth=!, labelindent=0pt]
\item 
    Follows directly from the definition.
\item
    Simply using the fact that $\norm{T^{AB}}^2\leq \norm{T^{AB}}^2+\norm{T^{AE}}^2 + \norm{T^{ABE}}^2$ together with the definition of $\mathcal{T}$ recovers the above equation.
\end{enumerate}
\end{proof}
The main result of this section is the following theorem that demonstrates how correlations within a tripartite system are constrained by a quasi-monogamy relation. 
\begin{theorem}
\label{thm:2-part-less-3-part}
Let $\rho_{ABE} \in \mathcal{H}=\mathcal{H}^{d^2}\otimes\mathcal{H}^{d_E}$ be an arbitrary tripartite state owned by $A,B$ and $E$ with local dimensions $d_A=d_B=d$ and $d_E$:
\begin{enumerate}[(i):]
    \item  The correlation of a composite system $\rho_{AB}$  with an arbitrary system $\rho_E$ limits the correlation of its marginals with the same:
        \begin{align}
        \mathcal{T}_{A|E}\left(\rho_{ABE}\right) 
        + \mathcal{T}_{B|E}\left(\rho_{ABE}\right) 
        &\leq \frac{g_{AB|E}}{\min\left(g_{A|E},g_{B|E}\right)} \mathcal{T}_{AB|E}\left(\rho_{ABE}\right)   \label{eq:2-part-less-3-part}
        \\
        \shortintertext{\item The correlation of any state $\rho_{AB} \in \mathcal{H}^{d^2}$ with an arbitrary state $\rho_E$ is restricted by:
        }
            \mathcal{T}_{AB|E}\left(\rho_{ABE}\right)
            &\leq
            \frac{
                d^{4}  
                -1
                -2
                \left(
                    \left\|T^{A}\right\|^2+\left\|T^{B}\right\|^2
                \right)
                -2 g_{A|B} \mathcal{T}_{A|B}
            }
            {g_{AB|E}}
        \end{align}

\end{enumerate}
The \labelcref{pr:2-part-less-3-part} of this theorem can be found in the appendix.
\end{theorem}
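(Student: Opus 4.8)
\emph{Strategy.} Throughout I would work with the bare quantity $S_{X|Y}:=g_{X|Y}\,\mathcal{T}_{X|Y}$, the maximised sum of squared tensor elements in \cref{def:mono}, and reinstate the factors $g$ only at the end. The whole argument rests on one elementary fact about this sum: the correlation matrix $[T^{XY}]_{ij}=\langle\lambda^X_i\otimes\lambda^Y_j\rangle$ of a bipartite cut has rank at most $d_{min}^2-1$, so, since the maximisation in \cref{def:mono} ranges over a basis change of the larger factor, all of its singular weight can be rotated into the retained $(d_{min}^2-1)\times(d_{min}^2-1)$ block; hence $S_{X|Y}$ equals the full matched-block norm $\|T^{XY}\|^2$ with the operators normalised to $d_{min}$. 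Both parts then become statements about how these blocks are nested.

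\emph{Part (i).} Regard $AB$ as a single party of dimension $d^2$. The identity-paddings $\lambda^A_i\mapsto\lambda^A_i\otimes\mathds 1_B$ and $\lambda^B_j\mapsto\mathds 1_A\otimes\lambda^B_j$ send the local operators into genuine, mutually orthogonal members of the $AB$-Bloch basis, and for $d_A=d_B=d$ this padding respects the normalisation since $\Tr\!\big((\lambda^A_i\otimes\mathds 1_B)^2\big)=d^2$. Consequently $T^{AE}$ and $T^{BE}$ embed as disjoint sub-blocks of $T^{(AB)E}$, and the remaining genuinely tripartite block contributes non-negatively; keeping track of the relative normalisation of the retained $E$-operators (whose dimensional prefactor only helps) gives
\begin{align}
S_{A|E}+S_{B|E}\;\le\;S_{AB|E}.
\end{align}
Finally, from $g_{A|E},g_{B|E}\ge\min(g_{A|E},g_{B|E})$ I would write $\mathcal{T}_{A|E}+\mathcal{T}_{B|E}\le (S_{A|E}+S_{B|E})/\min(g_{A|E},g_{B|E})\le S_{AB|E}/\min(g_{A|E},g_{B|E})$ and substitute $S_{AB|E}=g_{AB|E}\mathcal{T}_{AB|E}$ to reach \cref{eq:2-part-less-3-part}.

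\emph{Part (ii).} Here the idea is a purification bound. Adjoin a reference $F$ so that $\rho_{ABEF}$ is pure across the cut $C:=AB$ versus $EF$, with $d_{EF}\ge d_C=d^2$; then $\rho_C=\rho_{AB}$. Monotonicity under enlarging the conditioning system, \cref{lem:traceoutlemma}(ii), reads $S_{AB|E}\le S_{AB|EF}$. Since $\rho_{C(EF)}$ is pure, \cref{lem:Schmidt_decom} supplies a split Bloch basis in which its whole correlation tensor lives on the matched $d_C$-dimensional block, so $S_{AB|EF}=\|T^{C(EF)}\|^2$ on that block. Treating this block as a $d_C\times d_C$ bipartite pure state and applying \cref{lem:trace_squared} (with $\Tr(\rho_{C(EF)}^2)=1$ and the two marginals sharing a spectrum, so $\|T^{EF}\|^2=\|T^{C}\|^2=d_C\Tr(\rho_{AB}^2)-1$) gives
\begin{align}
S_{AB|EF}=d_C^2-1-2\|T^{C}\|^2=d^4+1-2d^2\Tr(\rho_{AB}^2).
\end{align}
I would then re-expand $d^2\Tr(\rho_{AB}^2)=1+\|T^A\|^2+\|T^B\|^2+\|T^{AB}\|^2$ via \cref{lem:trace_squared} applied to $\rho_{AB}$, use $\|T^{AB}\|^2=g_{A|B}\mathcal{T}_{A|B}$ (the maximisation in \cref{def:mono} being trivial when $d_A=d_B$), and divide by $g_{AB|E}$ to obtain the stated bound.

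\emph{Main obstacle.} The delicate part is entirely the bookkeeping around \cref{def:mono}: the monotone carries a basis maximisation, a truncation to $d_{min}^2-1$ terms, and the non-standard normalisation $\Tr(\lambda_i\lambda_j)=\delta_{ij}d_{min}$. This normalisation is precisely what keeps $S_{AB|E}$ finite as $d_E\to\infty$—the raw Bloch norm would diverge—and it is what must be reconciled when comparing the $A|E$, $B|E$ and $AB|E$ pictures in Part (i) and when invoking \cref{lem:traceoutlemma}(ii) in Part (ii). Verifying in each dimensional regime that the rank-alignment argument indeed promotes $S_{X|Y}$ to a full matched-block norm, and that the dimensional prefactors arising from the padding point in the direction of the claimed inequalities, is the step requiring the most care.
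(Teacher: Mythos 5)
Your proposal is correct and follows essentially the same route as the paper's own proof: part (i) is the same ``drop the non-negative tripartite block'' argument (the marginal tensors $T^{AE}$, $T^{BE}$ embed as orthogonal sub-blocks of the $AB|E$ tensor, and the $g$-factors are handled through the minimum), and part (ii) purifies the state, invokes \cref{lem:Schmidt_decom} to confine the pure state to the matched split-Bloch block, converts the cross-block norm into the purity of $\rho_{AB}$ via \cref{lem:trace_squared}, and descends from the purifying system back to $E$ with \cref{lem:traceoutlemma}(ii), exactly as in the appendix. The only difference is presentational: where the paper expands the full purity of the purification and cancels the mismatched $\tilde{T}$ terms explicitly (its $\Delta=0$ step), you shortcut through the equality of the Schmidt marginals' purities on the matched block, which is the same calculation in different bookkeeping.
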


We point out that \cref{thm:2-part-less-3-part} (i) is reminiscent of the Coffman-Kundu-Wootters inequality \cite{CKW}.
It was the first example of a monogamy relation.
For three qubits the CKW inequality is used almost synonymous with the term monogamy. 
The form of \cref{thm:2-part-less-3-part} (i) is the same, except for a dimensional correction factor, hidden in the normalization, on the right hand side and our quantity $\mathcal{T}_{\Omega|\Sigma}$ replacing the concurrence. 
Even though this result was inspired by the notion of monogamy of entanglement (\cite{Terhal,CKW,OsborneVerstraete,Def:SquashedEntanglement,KoashiWinter}) we have to differentiate from these well known relations. 

First, our results primarily discuss \emph{correlations} instead of entanglement. 
Our quantity  $\mathcal{T}_{\Omega|\Sigma}$ can easily be rewritten into an entanglement \emph{monotone} by subtracting the maximal value allowed by a separable state, however we have not proven the LOCC non-increasingness of $\mathcal{T}_{\Omega|\Sigma}$. 
Therefore we can not assume it to be an entanglement \emph{measure}, unlike the concurrence \cite{Def:Concurrence} used in Coffman-Kundu-Wootters \cite{CKW}.

Second, the functional form of our relation is different since it involves dimension-dependent constants. 
We consider this to be acceptable, since the most well known monogamy inequalities are also inherently dimension-dependent. 
They may not contain explicit dimension factors, but in fact they are dimension-dependent since they only hold for the qubit case. 
Already in the case of the slightly larger qutrit system a counter example is known \cite{CounterExampleOu}, that can be extended to the case of arbitrary dimension \cite{CounterExampleMarcus}. 
Our dimension-dependent factors, hidden in the normalization constant $g_{\ast|\ast}$, in turn allow our relations to hold in any dimension.

We call \cref{thm:2-part-less-3-part} (ii) a quasi-monogamy relation, since it captures the fundamental idea which monogamy relations aim to describe: 
That the correlations of Alice and Bob limit the amount of possible shared correlations of Eve with either of the two remaining systems.
In fact, \cref{thm:2-part-less-3-part} (ii) evaluated for maximally entangled Alice and Bob turns out to coincide with the upper bound of $\mathcal{T}^{sep}$ for product states.
Thus generalising a known result for the CKW inequality, that two maximally entangled parties with \emph{equal} dimension can not be entangled with any other party.
We point out that our result has a functional form in \emph{arbitrary} dimension.

But even if Alice and Bob are not maximally entangled, the bound (\cref{thm:2-part-less-3-part} (ii)) is still limiting the correlations Eve can have with the system.
We assume that Alice and Bob have the same local dimension, e.g. $d_A=d_B=d$, normalize by the bound for separable states $g_{AB|E}=\mathcal{T}_{AB|E}^{sep}$ with  $\mathcal{T}_{AB|E}^{sep}=\left(d^2-1\right)\left(d_E-1\right)\geq \mathcal{T}_{AB|E}\left(\rho_{AB}\otimes\rho_{E}\right)$  and define 
$\mathcal{T}_{AB|E}^{excess}:=d\left(\mathcal{T}_{AB|E}-1\right)$.
$\mathcal{T}_{AB|E}^{excess}$ measures how much entanglement Eve shares with the composite system of Alice and Bob scaled by the local dimension $d$.
Note that we estimate entanglement rather than correlations due to the fact that we subtract the bound for separable states 
$1=\mathcal{T}_{AB|E}^{sep}/ g_{AB|E}$.
The scaling is only used to make $\mathcal{T}_{AB|E}^{excess}$ comparable for different dimensions.

Considering the worst case scenario, that Eve holds a purification to Alice and Bob's system (see the proof of \cref{thm:2-part-less-3-part} in the Appendix), i.e. $d_E=d^2$, we plot (\cref{fig:restricted_entanglement})  the maximal value of $\mathcal{T}_{AB|E}^{excess}$ for $0\leq\mathcal{T}_{A|B}\leq 1$ for different dimensions.
Note that the range of $\mathcal{T}_{A|B}$ corresponds to a normalization of $g_{A|B}=d^2-1$.
We consider two different scenarios, i.e. the local correlation tensors $\norm{T^{A}}^{2}$ and $\norm{T^{B}}^{2}$ are either minimal or maximal.
If they are minimal (\cref{fig:resticted_entanglement_worst}), either due to ignorance of the local states or because the state is proportional to the identity, we see that only maximal entanglement between Alice and Bob, i.e. $\mathcal{T}_{A|B}=1$  excludes entanglement with Eve.
This is however true independent of the local dimension $d$ of Alice's and Bob's system.
The other scenario is also interesting (\cref{fig:resticted_entanglement_best}): 
If it is possible to certify that $\norm{T^{A}}^{2}+\norm{T^{B}}^{2}$ is maximal it suffices that Alice and Bob share some correlations  to exclude any entanglement with Eve.
This is however highly dependent on the local dimension $d$:
Unsurprisingly the lower the local dimension the greater the advantage of knowing the correlation tensor norms of the local systems.
In fact we can see that the advantage for $d=100$ is minute even for maximal local correlation tensor norms. 
The factors influencing the bound $\norm{T^{A}}^{2}$, $\norm{T^{B}}^{2}$ and $\mathcal{T}_{A|B}$ are however related, i.e.
$0\leq\norm{T^{A}}^{2}+\norm{T^{B}}^{2}\leq\min\left( 2d -2,\left(d^{2}-1\right)\left(1- \mathcal{T}_{A|B}\right)\right) $.
For the proof of these bounds consult the appendix (\cref{lem:lower-tensor_bound}).

Let us finish this section again with a short note on the $n$-partite generalization:
\begin{remark}
Note that this correlation monotone can be used to describe multi-partite systems as well:
Let $\rho_{A_1,\dots,A_n}$ be a multi-partite system with $\Omega \subset \left\{A_{1} \ldots A_{n}\right\}, \Sigma \subset \left\{A_{1} \ldots A_{n}\right\}$ and $\Omega \cap \Sigma \neq \emptyset$.
If $\Sigma,\Omega$ are composite systems, i.e. it is possible to describe them as tensor product of local Bloch bases, then we want to consider the correlations of all possible subsystems of $\Sigma$ with all subsystems of $\Omega$.
Therefore, if the index is replaced with a multi-index $\lambda_i^\Sigma\rightarrow \lambda^\Sigma_{i_1,\dots,i_{\abs{\Sigma}}}$ and $\lambda_j^\Omega\rightarrow \lambda^\Omega_{j_1,\dots,j_{\abs{\Omega}}}$, we need the cross sum over the new indices in the parts both to be greater then one: $\sum_k i_k \geq 1 \wedge \sum_l j_l \geq 1$.
In the language of correlation tensors this is of course equal to the sum of the squared correlation tensor norms $\sum_{v \in \Gamma}\norm{T^v}^2$ with $\Gamma:=P\left(\Omega \cup \Sigma\right)\setminus \left(P\left(\Omega\right)\cup P\left(\Sigma\right) \right)$,
since this set contains exactly the correlation tensor elements that have at least one element in each of the two subsets $\Omega$ and $\Sigma$.
Finally, note that if the subsystems of the bi-partition have the \emph{same} dimension $d_\Sigma=d_\Omega$ the correlation tensors are defined as \cref{def:corr_ten} and the maximization is obsolete,
if they have \emph{different} dimension $d_\Sigma\neq d_\Omega$ the split Bloch basis will be used and the sum will be over $\norm{T_{SD}^v}^2$ as defined in \cref{def:split_corr}.
The maximization can be found by arbitrary basis change in the bigger space. 
\end{remark}

\begin{figure*}[tbp]
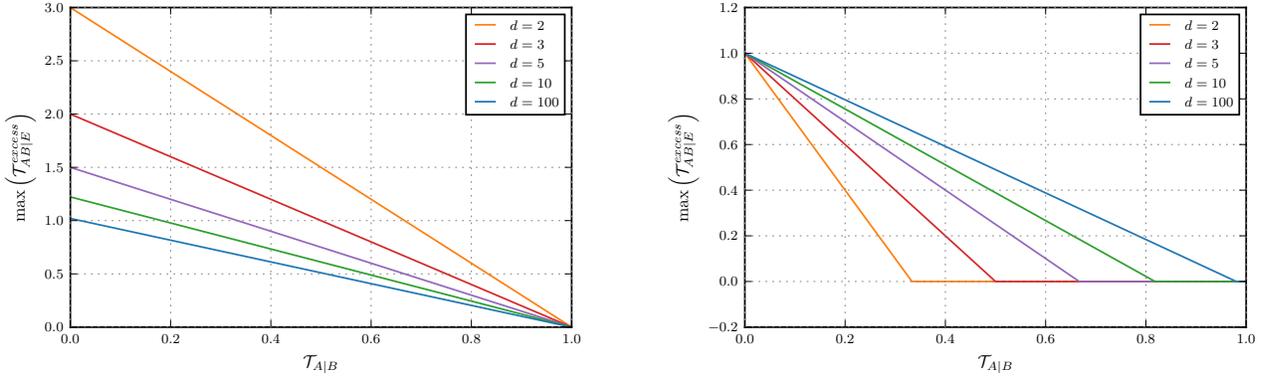

\begin{subfigure}[b]{0.49\textwidth}
\centering
\scalebox{0.65}{\input{worst_case.pgf}}
\caption{Locally maximally mixed states, i.e. only correlations contributing to the purity $\norm{T^{A}}^{2}+\norm{T^{B}}^{2}=0$.
\label{fig:resticted_entanglement_worst}}
\end{subfigure}
\begin{subfigure}[b]{0.49\textwidth}
\centering
\scalebox{0.65}{\input{best_case.pgf}}
\caption{
Local Bloch vectors at maximal possible local norm, i.e. $\norm{T^{A}}^{2}+\norm{T^{B}}^{2}=$ $\min\left( 2d -2,\left(d^{2}-1\right)\left(1- \mathcal{T}_{A|B}\right)\right) $ \label{fig:resticted_entanglement_best}
}
\end{subfigure}
\caption{ The (scaled) shared entanglement of Eve with Alice and Bob $\mathcal{T}_{AB|E}^{excess}=
d\left(
\mathcal{T}_{AB|E}
-1\right)$ is restricted by the shared correlations between the latter $\mathcal{T}_{A|B}$ even if Eve holds a purification for the composite system of Alice and Bob, e.g. $d_E=d^2$. 
Note that $\mathcal{T}_{AB|E}^{excess}$ is normalized by the bound of separable states $g_{AB|E}=\left(d^2-1\right)\left(d_E-1\right)$ and weighed with $d$. \label{fig:restricted_entanglement}}
\end{figure*}

\section{Linear Entropy Inequalities from the Bloch Picture}
\label{Sec:EntropyInequalities}
\noindent
In this section we will demonstrate that correlation tensor norm constraints (often also called Bloch vector length constraints) can be reinterpreted as entropy inequalities.
This equivalence is best seen in the  Bloch picture:
The natural relation of $\Tr(\rho^{2})$ to the correlation tensors of the Bloch representation allows us to rewrite existing relations between marginals in the Bloch picture into entropy inequalities of suitable members of several parametrized entropic families. 
 The two most famous of these are the additive  R\'enyi $\alpha$-family \cite{RenyiEntropy}, given by 
\begin{align}
\mathcal{S}_{\alpha}(\rho)&:=\frac{1}{1-\alpha}\log\Tr(\rho^{\alpha}),
\\
\shortintertext{
 alongside the non-additive Tsallis $q$-family \cite{TsallisEntropy} defined by}
 \mathcal{S}^{q}(\rho)&:=\frac{1}{q-1}\left( 1-\Tr(\rho^{q}) \right).
 \end{align}
 Both families retrieve the von Neumann entropy in the case of $\alpha/q\rightarrow1$. 
In the case of $\alpha=q=2$ we can use the simple representation of $\Tr\left(\rho^2\right)$ in form of correlation tensor norms (compare \cref{lem:trace_squared}) to find simple forms of these entropies:
 \begin{remark}
Given $n$-partite quantum state $\rho_\Sigma\in \mathcal{H}^{d_\Sigma}=\otimes_{i=1}^n\mathcal{H}^{d_{i}}$ where $\Sigma:=\{\Sigma_1,\Sigma_2,\cdots,\Sigma_n\}$ is the set of all parties we find 
the Tsallis $2$ or linear entropy as:
    \begin{align}
  \label{Def:SL}
\mathcal{S}_{L}(\rho_{\Sigma})
    &:=
    \left(1-\Tr(\rho^{2}_\Sigma)\right)
    \\
    &=
    1
    -
    \frac{1}{d_\Sigma}
    \left(
        1
        +\sum_{\forall v \in P(\Sigma)} \left\| T^{v} \right\|^{2} 
    \right)
    \end{align}
Note that we simply used \cref{lem:trace_squared} to replace $\Tr(\rho^{2}_\Sigma)$.
 \end{remark}
 
 Entropy inequalities are an extensively studied subject already in the context of classical information theory, where entropy relations are readily available \cite{YeungCone}.
 The first attempts to recreate the classical results in quantum information  have been achieved by Lieb and Ruskai \cite{LiebRuskai}, who showed the famous strong subadditivity relation (SSA)
 \begin{equation}
 \mathcal{S}_{1}(\rho_{ABC})+\mathcal{S}_{1}(\rho_{C}) \leq \mathcal{S}_{1}(\rho_{AC}) + \mathcal{S}_{1}(\rho_{BC})\,,
 \end{equation}
for the von Neumann entropy. The hope to extend this result to the remaining parameter space of any of these two families has been impaired by broadly applicable no-go results. 
It is known that the Tsallis $q$-familiy can not satisfy SSA except for $q=1$ \cite{FuruichiTsallisSSA}. 
In contrast the weaker notion of subadditivity holds for quantum Tsallis entropies with $q \geq 1$ \cite{AudenaertQSA}. 
Similarly, the case of the  R\'enyi entropy has been mostly settled by \cite{MilanRenyi}, who demonstrated that for $\alpha \in (0,1)\cup(1,\infty)$ SSA can not hold. 
The results of \cite{MilanRenyi} go far beyond SSA.
They limit the parameter space where SSA or even a similar relation may hold severely. 
In the interval $(0,1)$ non-negativity is the only possible relation and in $(1,\infty)$ no homogeneous, thus linear, entropy inequality may exist. For the R\'enyi $0$-entropy linear inequalities do exist \cite{rankineq}, however SSA is not among them. 
As a reaction to this setback many authors consider alternative or generalized versions of SSA \cite{PetzVirosztek}. 
In a similar spirit the Bloch picture allows us to construct relations that resemble dimension-dependent versions of SSA.

\begin{theorem}
For a tripartite quantum system  $\rho_{ABC}$ we find the following entropy  inequality for the linear entropy $\mathcal{S}_{L}\left(\rho_{ABC}\right)=1- \Tr\left(\rho_{ABC}^{2}\right)$:
\begin{align}
\mathcal{S}_{L}\left(\rho_{ABC}\right)
 + \frac{1}{d_Ad_B} \mathcal{S}_{L}\left(\rho_{C}\right) 
\leq
\frac{1}{d_B}\mathcal{S}_{L}\left(\rho_{AC}\right)
+ \frac{1}{d_A} \mathcal{S}_{L}\left(\rho_{BC}\right)
+\frac{d_A d_B+1-d_A-d_B}{d_A d_B} \label{eq:dim_sub_add}
\end{align}
The \labelcref{pr:dim_sub_add} of the theorem can be found in the appendix.
\label{thm:dim_sub_add}
\end{theorem}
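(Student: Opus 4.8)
The plan is to translate the whole statement into the language of correlation-tensor norms via \cref{lem:trace_squared} and watch it collapse to something manifestly true. The first step is to substitute $\mathcal{S}_L(\rho_v)=1-\Tr(\rho_v^2)$ into every term and split each side into a purely dimension-dependent constant plus a linear combination of $\Tr(\rho_v^2)$. The constants on the two sides should then coincide exactly: the left-hand side contributes $1+\tfrac{1}{d_Ad_B}$, while the right-hand side contributes $\tfrac{1}{d_B}+\tfrac{1}{d_A}+\tfrac{d_Ad_B+1-d_A-d_B}{d_Ad_B}$, and collecting over the common denominator $d_Ad_B$ gives $\tfrac{d_A+d_B+d_Ad_B+1-d_A-d_B}{d_Ad_B}=1+\tfrac{1}{d_Ad_B}$. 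Indeed, the peculiar constant $\tfrac{d_Ad_B+1-d_A-d_B}{d_Ad_B}$ appearing in \labelcref{eq:dim_sub_add} is precisely engineered to make this cancellation happen.

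With the constants gone, the inequality is equivalent (after multiplying by $-1$ and reversing) to the purely purity-based statement
\begin{align}
\Tr(\rho_{ABC}^2)+\tfrac{1}{d_Ad_B}\Tr(\rho_C^2)\;\geq\;\tfrac{1}{d_B}\Tr(\rho_{AC}^2)+\tfrac{1}{d_A}\Tr(\rho_{BC}^2).
\end{align}
Here I would insert \cref{lem:trace_squared} for each marginal, noting that $\Tr(\rho_{AC}^2)$ carries the prefactor $\tfrac{1}{d_Ad_C}$, $\Tr(\rho_{BC}^2)$ the prefactor $\tfrac{1}{d_Bd_C}$, and $\Tr(\rho_C^2)$ the prefactor $\tfrac{1}{d_C}$, so that after the stated weights every term acquires the common prefactor $\tfrac{1}{d_Ad_Bd_C}$. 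Multiplying through by $d_Ad_Bd_C$ turns the whole thing into a comparison of sums of squared correlation-tensor norms.

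The final step is to cancel the matching norm terms on the two sides. The terms $\norm{T^A}^2$, $\norm{T^B}^2$, $2\norm{T^C}^2$, $\norm{T^{AC}}^2$, $\norm{T^{BC}}^2$ and the additive $2$ appear identically on both sides and drop out, leaving exactly
\begin{align}
\norm{T^{AB}}^2+\norm{T^{ABC}}^2\;\geq\;0,
\end{align}
which holds trivially. Structurally this is the point of the result: $AB$ and $ABC$ are precisely the subsets of $\{A,B,C\}$ supported on both $A$ and $B$, hence precisely the correlation tensors that are annihilated when one passes to any of the marginals $\rho_{AC}$, $\rho_{BC}$ or $\rho_C$; their non-negative contribution to the global purity is what powers the inequality. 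I therefore do not expect a genuine analytic obstacle here — the only real work is bookkeeping, namely tracking the dimensional prefactors so that all norms reduce to the common denominator $d_Ad_Bd_C$ and verifying that the tailored constant cancels the unit terms exactly.
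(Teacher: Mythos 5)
Your proposal is correct and is essentially the paper's own proof run in reverse: the paper starts from the expansion of $d_Ad_Bd_C\Tr(\rho_{ABC}^2)$ via \cref{lem:trace_squared}, drops the non-negative terms $\norm{T^{AB}}^2+\norm{T^{ABC}}^2$, regroups the remaining norms into $d_Ad_C\Tr(\rho_{AC}^2)+d_Bd_C\Tr(\rho_{BC}^2)-d_C\Tr(\rho_C^2)$, and then converts to $\mathcal{S}_L$, which is exactly your bookkeeping (including the cancellation of the tailored constant) read backwards. The only difference is cosmetic: you reduce the claim to the manifestly true $\norm{T^{AB}}^2+\norm{T^{ABC}}^2\geq 0$, whereas the paper derives the claim from the purity expansion directly.
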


This theorem can be considered as providing a substitute for SSA in the $q=2$ case, where SSA is known not to hold \cite{FuruichiTsallisSSA}. Alternatively, we could have chosen to express $\Tr(\rho^{2})$ in terms of $2^{-S_{2}}$ instead of $S_{L}$. This formulation leads to a non-linear entropy inequality for the  R\'enyi entropy with $\alpha=2$. We point out that such a reformulation is not in conflict with the established no-go result about homogeneous  R\'enyi $\alpha$-entropy relations \cite{MilanRenyi} for $\alpha \in (1,\infty)$, since $2^{-S_{2}}$ is non-linear.
The proof of \cref{thm:dim_sub_add} in the appendix however employs the very same techniques as the proofs of the monogamy relations in \cref{Sec:Monogamy}.
Generally speaking all proofs are inspired by the Bloch formalism. 
Essentially, we set some elements of the correlation tensor to zero. 
The resulting statement depends on the choice which correlation tensors are kept and which are discarded.
Interestingly, these very different types of results seem to be complementary to each other in the Bloch picture. 
The correlation tensor formalism allows us to harness tools like purity and combine it with the Schmidt decomposition in a straightforward way. 
Additionally, once we know that a certain state does not exhibit correlations in a particular sector of the correlation tensor, we can develop tailor-made entropy relations that are expected to be tight with respect to the chosen state. 

Before closing this section, we need to asses the utility of our newly derived linear entropy inequalities. 
It is natural to ask how they compare to other well-known entropy inequalities. 
It was possible to classify all possible classical entropy inequalities by a well known result, that showed that all classical entropy inequalities are representable as a convex cone of properly chosen elementary entropy vectors \cite{YeungCone}.
Similar arguments were used to study the von Neumann entropy \cite{PippengerCone}.
In contrast, benchmarking q-entropy inequalities is not a straightforward endeavour, since to the best of our knowledge no comparable complete classifications exist.
Audenaert's (weak) subadditivity of q-entropies \cite{AudenaertQSA} seems to be one of the natural competitors
\begin{equation}
\label{eq:AudenaertQSA}
\mathcal{S}^{q}(\rho_{AB})\leq \mathcal{S}^{q}(\rho_{A})+\mathcal{S}^{q}(\rho_{B}).
\end{equation}
In the following we want to demonstrate that \cref{thm:dim_sub_add,eq:dim_sub_add} does not follow trivially from the above \cref{eq:AudenaertQSA}. Since \cref{eq:dim_sub_add} involves tripartite terms and \cref{eq:AudenaertQSA} does not, our first step is to pad Audenaert's subadditivity by an extra system and set $q=2$ to make both relations comparable, thus we rewrite both into
\begin{align}
\label{eq:AudenaertQSA2}
\mathcal{S}_{L}(\rho_{ABC})&\leq \mathcal{S}_{L}(\rho_{AC})+\mathcal{S}_{L}(\rho_{B}) \\
\mathcal{S}_{L}\left(\rho_{ABC}\right) &\leq
\frac{1}{d_B}\mathcal{S}_{L}\left(\rho_{AC}\right)
+ \frac{1}{d_A} \mathcal{S}_{L}\left(\rho_{BC}\right)
- \frac{1}{d_Ad_B} \mathcal{S}_{L}\left(\rho_{C}\right)
+\frac{d_A d_B+1-d_A-d_B}{d_A d_B}.
\label{eq:dim_sub_add2}
\end{align}
The above are bounds on the same quantity $\mathcal{S}_{L}(\rho_{ABC})$. 
Now the only remaining question whether one bound is sharper than the other. This question depends on the state and the dimension. 
Let us set for example $d_{A}=d_{B}=d_{C}=2$, then the question whether \cref{eq:dim_sub_add2} is sharper than \cref{eq:AudenaertQSA2} is equivalent to
\begin{equation}
\frac{1}{2}\mathcal{S}_{L}(\rho_{AC})+\mathcal{S}_{L}(\rho_{B}) 
- \frac{1}{2} \mathcal{S}_{L}(\rho_{BC})
+ \frac{1}{4} \mathcal{S}_{L}\left(\rho_{C}\right)
 > \frac{1}{4}.
\end{equation}
Phrased like this it is obvious, that this is true for all highly mixed states.
For the maximally mixed state $\rho=\frac{1}{d_{A}d_{B}d_{C}}\mathds{1}$ each single party marginal fulfills $S_{L}(\Tr_{ij}(\frac{1}{8}\mathds{1}))=\frac{1}{2}$, while all two party marginals $S_{L}(\Tr_{j}(\frac{1}{8}\mathds{1}))=\frac{3}{4}$. This means that for the maximally mixed state the above evaluates to the true statement $\frac{5}{8}>\frac{1}{4}$. 
This gives one example where our dimension-dependent SSA is sharper than (weak) subadditivity and thus independent.
Similar results will hold for an $\varepsilon$-ball of states surrounding the maximally mixed state.
\begin{figure*}[t]
\begin{subfigure}{.33\textwidth}
  \includegraphics[width=.9\textwidth]{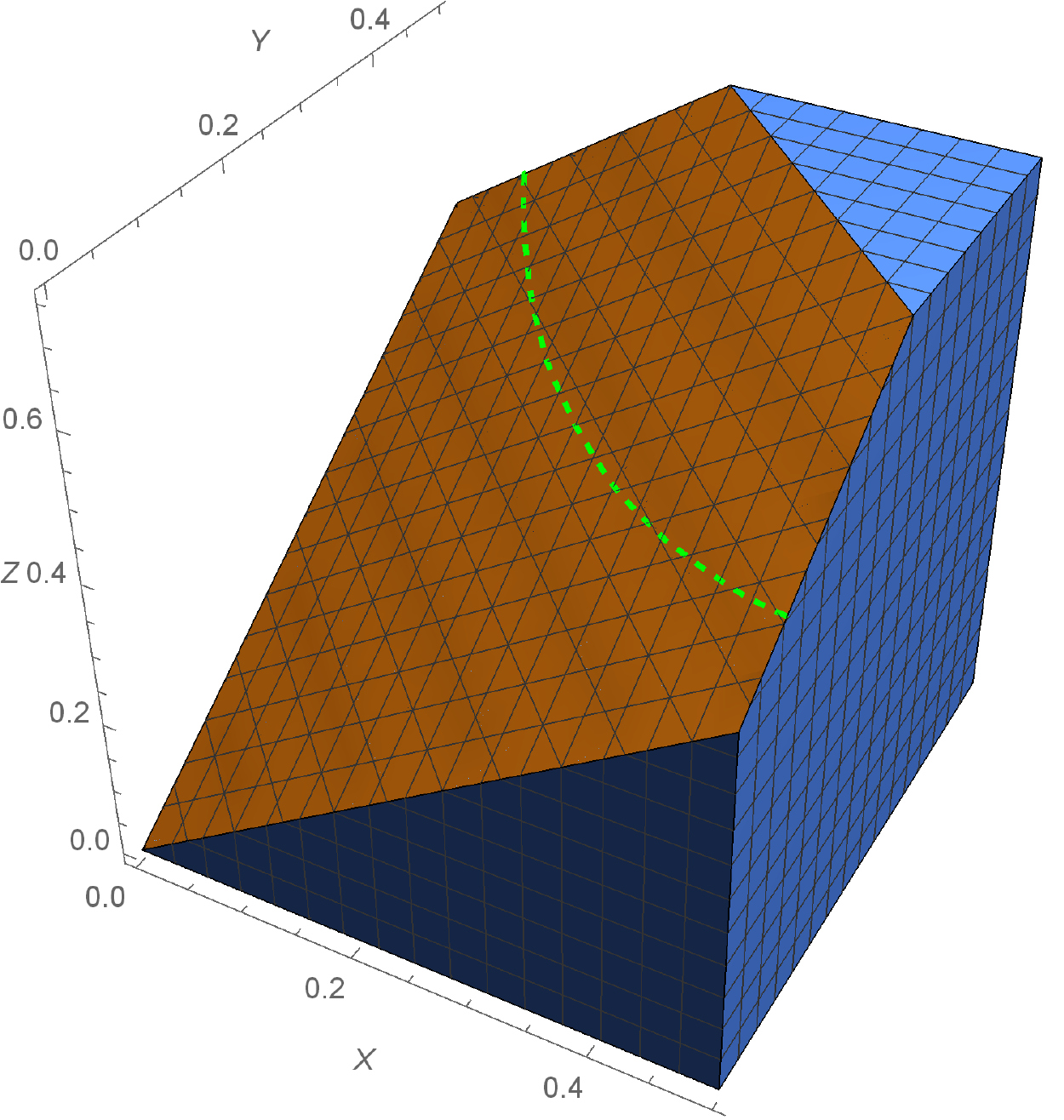}
  \caption{(Weak) Subadditivity}
  \label{fig:A1}
\end{subfigure}%
\begin{subfigure}{.33\textwidth}
  \centering
  \includegraphics[width=.9\linewidth]{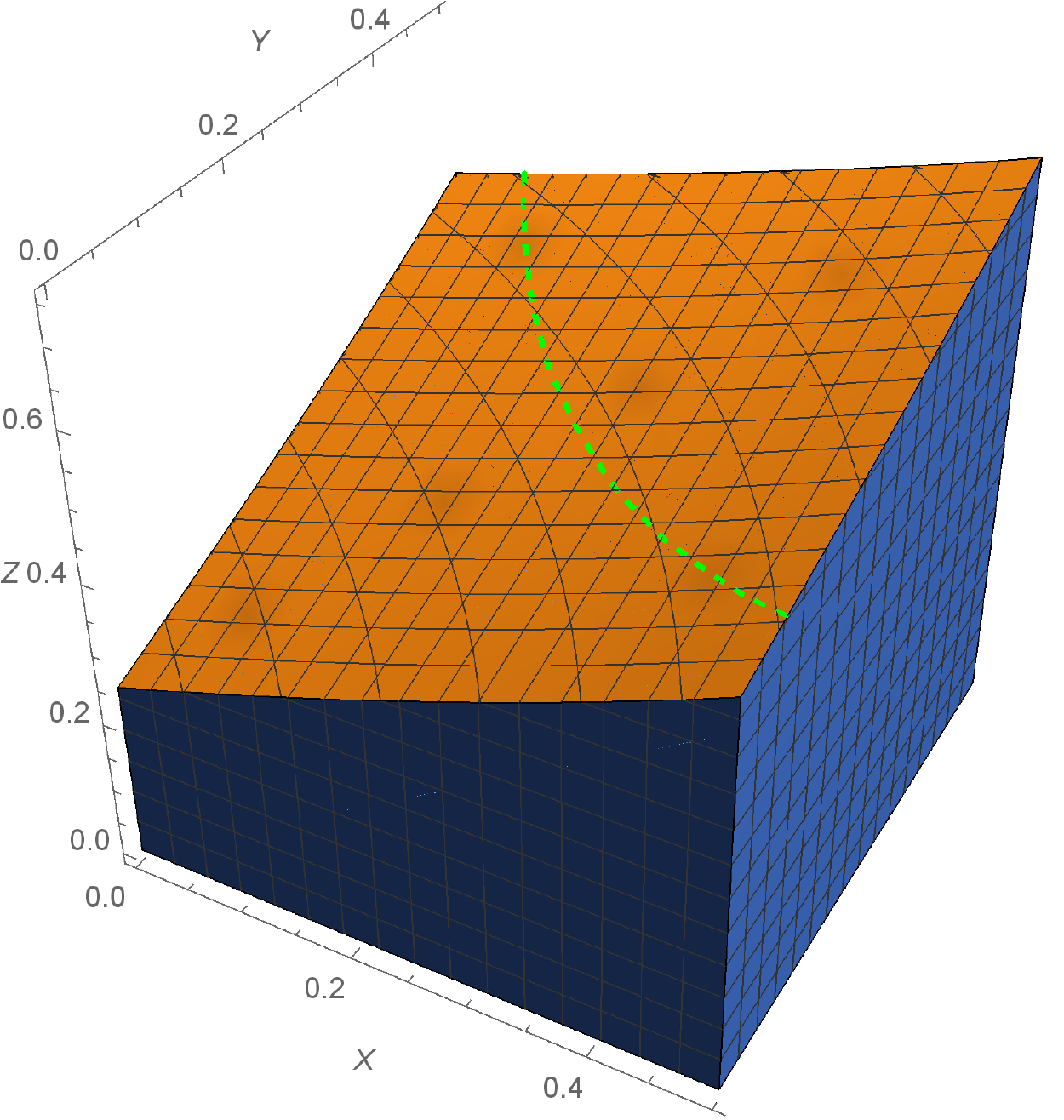}
  \caption{Generalized Pseudo-Additivity}
  \label{fig:A3}
\end{subfigure}

\caption{Comparison of different entropy inequalities for $d_{A}=d_{B}=2$.
On the X-Y plane we have plotted the values for $\mathcal{S}_{L}(A)$ and $\mathcal{S}_{L}(B)$, the Z-axis describes the maximal attainable value of $\mathcal{S}_{L}(AB)$ allowed by the corresponding entropy inequality.
This means that a lower  Z-value in the plots corresponds to the the respective inequality being sharper for the corresponding marginal values $\mathcal{S}_{L}(A)$ and $\mathcal{S}_{L}(B)$.
Above the green dashed line in \cref{fig:A1} and \cref{fig:A3} our generalized pseudo-additivity is sharper than subadditivity, below the line the subadditivity is the sharper entropy inequality.
}
\label{fig:A}
\end{figure*}

\section{A quadratic entropy inequality for the linear entropy from the Bloch Picture}
\label{Sec:NonlinEntropyInequalities}
\noindent
We know that in the special case of $q=\alpha=1$, the Shannon \cite{YeungCone} or von Neumann \cite{PippengerCone} entropy inequalities all describe a convex cone. All possible inequalities are described by linear combinations of certain elementary inequalities.
This is a remarkably easy and elegant situation.
Unfortunately similar results do not seem available for the remainder of the Tsallis family. 
Since the Tsallis 2-entropy features values between $0$ and $1$, we can nonetheless represent physical realizations in a hypercube of dimension $2^{n}-1$, which allows a particularly instructive visualization for $n=2$.

In the discussion of \cref{Sec:EntropyInequalities}, we were asking what Tsallis 2-entropy or linear entropy relations were possible for a tripartite system. In absence of complete classifications, we had to resort to a direct comparison with the most famous equation known to us: subadditivity. In this discussion we briefly consider the question of how q-entropies may be classified and find a new non-linear entropy inequality for the linear entropy.

Generally speaking, not many linear inequalities are available for $q$-entropies. Apart from the subadditivity \cite{AudenaertQSA}, the triangle inequality \cite{RasteginUnifiedEntropies} and the trivial choice of non-negativity not a lot is known. 
We know for a fact, that SSA does never hold for $q \neq 1$ \cite{PetzVirosztek}. 
There are some further information theoretic results such as Fannes type bounds and Lesche stability \cite{RasteginUnifiedEntropies}, but we are not aware of any further linear inequalities discussed in literature. 

However, in the special case of $q=2$ we can say more. 
Our \cref{thm:dim_sub_add}  delivers a new, albeit dimension-dependent, linear entropy inequality for $q=2$ and there may very well be more independent linear inequalities.

Even though we are not aware of any well-known non-linear entropy inequality we can construct a non-trivial example.
As an Ansatz we can consider the most simple non-linear situation: the quadratic case.
For tensor products we have a clear candidate. It is known in the literature that a q-entropy fulfills the so-called pseudo-additivity \cite{AbePseudoadditivity},\cite{PetzVirosztek}
\begin{equation}
\label{eq:Pseudo-Additivity}
S_{q}(\rho_{A} \otimes \rho_{B})= S_{q}(\rho_{A})+S_{q}(\rho_{B})+(1-q)S_{q}(\rho_{A})S_{q}(\rho_{B}).
\end{equation}
This gives $q$ a clear interpretation as a parametrization of the corresponding Tsallis entropies non-additivity.
Furthermore, it is a quadratic relation between a composite system and its marginals. Quadratic functions are non-linear but still simple enough to work with. 
The caveat is that pseudo-additivity only holds for product states $\rho_{A} \otimes \rho_{B}$.
In the next theorem we show, that with simple Bloch picture techniques we can remedy this drawback for $q=2$. 
\begin{theorem}
For all $\rho_{AB}$ the the linear entropy or Tsallis 2-entropy can be bounded as:
\begin{align}
\label{eq:nonlinqentropy}
1-\frac{d_{A}d_{B}}{4}
\left(1-\mathcal{S}_{L}(\rho_{AB})+\frac{1}{d_{A}d_{B}}\right)^{2}
\leq
 \mathcal{S}_{L}(\rho_{A})+\mathcal{S}_{L}(\rho_{B})-\mathcal{S}_{L}(\rho_{A})\mathcal{S}_{L}(\rho_{B}).
\end{align}
The \labelcref{pr:nonlinqentropy} can be found in the Appendix.
\end{theorem}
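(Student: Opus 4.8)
The plan is to convert the entire inequality into the language of correlation-tensor norms by means of \cref{lem:trace_squared}, after which it collapses to an elementary algebraic statement that follows from the arithmetic--geometric mean inequality. First I would introduce the shorthands $a := \norm{T^A}^2$, $b := \norm{T^B}^2$ and $c := \norm{T^{AB}}^2$, all of which are non-negative. Applying \cref{lem:trace_squared} to the bipartite state and to each marginal gives
\begin{align}
\Tr(\rho_{AB}^2) = \frac{1+a+b+c}{d_A d_B}, \qquad
\Tr(\rho_A^2) = \frac{1+a}{d_A}, \qquad
\Tr(\rho_B^2) = \frac{1+b}{d_B}.
\end{align}

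The key observation I would exploit is that the right-hand side of \cref{eq:nonlinqentropy} factorizes. Writing $\mathcal{S}_{L} = 1 - \Tr(\rho^2)$ throughout, one has
\begin{align}
\mathcal{S}_{L}(\rho_A) + \mathcal{S}_{L}(\rho_B) - \mathcal{S}_{L}(\rho_A)\mathcal{S}_{L}(\rho_B)
= 1 - \bigl(1-\mathcal{S}_{L}(\rho_A)\bigr)\bigl(1-\mathcal{S}_{L}(\rho_B)\bigr)
= 1 - \frac{(1+a)(1+b)}{d_A d_B},
\end{align}
so the pseudo-additive combination is simply $1 - \Tr(\rho_A^2)\Tr(\rho_B^2)$. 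For the left-hand side I would use $1 - \mathcal{S}_{L}(\rho_{AB}) + \tfrac{1}{d_A d_B} = \tfrac{2+a+b+c}{d_A d_B}$, so that the left-hand side equals $1 - \tfrac{(2+a+b+c)^2}{4 d_A d_B}$. Clearing the common factor $d_A d_B$, the claimed inequality becomes equivalent to
\begin{align}
4(1+a)(1+b) \leq (2+a+b+c)^2.
\end{align}

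Finally, setting $x := 1+a \geq 1$ and $y := 1+b \geq 1$, the right-hand side is $(x+y+c)^2$. Since $c \geq 0$ and $x+y > 0$ we have $(x+y+c)^2 \geq (x+y)^2$, and AM--GM gives $(x+y)^2 \geq 4xy = 4(1+a)(1+b)$, which closes the argument. I do not expect a genuinely hard step here: the only points requiring care are the bookkeeping of the shift $+\tfrac{1}{d_A d_B}$ inside the square on the left and the recognition of the right-hand side as $1 - \Tr(\rho_A^2)\Tr(\rho_B^2)$. Once the problem is phrased in tensor norms, the inequality is a one-line consequence of AM--GM together with the positivity of the genuinely bipartite correlation tensor norm $c = \norm{T^{AB}}^2$; in particular equality forces $c=0$ and $a=b$, which identifies precisely where the bound is tight.
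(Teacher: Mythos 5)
Your proof is correct and is essentially the paper's own argument in rearranged form: both rest on \cref{lem:trace_squared}, discard $\norm{T^{AB}}^2 \geq 0$, and combine the AM--GM inequality with the factorization $\mathcal{S}_{L}(\rho_A)+\mathcal{S}_{L}(\rho_B)-\mathcal{S}_{L}(\rho_A)\mathcal{S}_{L}(\rho_B)=1-\Tr(\rho_A^2)\Tr(\rho_B^2)$, which is exactly the $q=2$ pseudo-additivity the paper invokes via $\Tr\left(\left(\rho_A\otimes\rho_B\right)^2\right)$. Your reduction to the polynomial inequality $4(1+a)(1+b)\leq(2+a+b+c)^2$ is a somewhat cleaner bookkeeping of the same two estimates, and your equality characterization ($c=0$ and $a=b$) is a small addition the paper does not state explicitly.
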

Using the purity bound we are able to turn the equality in \cref{eq:Pseudo-Additivity} into an inequality  and thus managed to also cover entangled or correlated composite systems $\rho_{AB}$ instead of the product state appearing on the left hand side of \cref{eq:Pseudo-Additivity}. 
This turns the pseudo-additivity into a general non-linear inequality for linear entropy that is applicable to all states. 

That this inequality is independent of subadditivity can be seen through the example of the maximally mixed state $\frac{1}{4}\mathds{1}_{AB}$ for $d_{A}=d_{B}=2$. 
Clearly, in this case $\mathcal{S}_{L}(\frac{1}{4}\mathds{1}_{AB})=\frac{3}{4}$, while its marginals attain the values $\mathcal{S}_{L}(\frac{1}{2}\mathds{1}_{A})=\mathcal{S}_{L}(\frac{1}{2}\mathds{1}_{B})=\frac{1}{2}$.
Thus $\frac{3}{4} \leq \frac{1}{2} + \frac{1}{2}$ fulfills subadditivity $\mathcal{S}^{q}(\rho_{AB})\leq \mathcal{S}^{q}(\rho_{A})+\mathcal{S}^{q}(\rho_{B})$, but is not sharp.
On the other hand \cref{eq:nonlinqentropy} evaluates to
$\frac{3}{4}=1-\left(\frac{5}{4}-\frac{3}{4} \right)^{2} \leq \frac{1}{2} + \frac{1}{2} - \frac{1}{4}=\frac{3}{4}$.
Therefore, \cref{eq:nonlinqentropy} does \emph{not only hold}, it is \emph{even tight}. 
We can conclude that it is an independent equation.
A purely linear description of the entropy inequalities for the linear entropy seems impossible.
Therefore we have to forfeit the hope to achieve a linear description in analogy to \cite{YeungCone} of the linear entropy. 

\begin{figure*}[t]
\begin{subfigure}{.33\textwidth}
  \includegraphics[width=.9\textwidth]{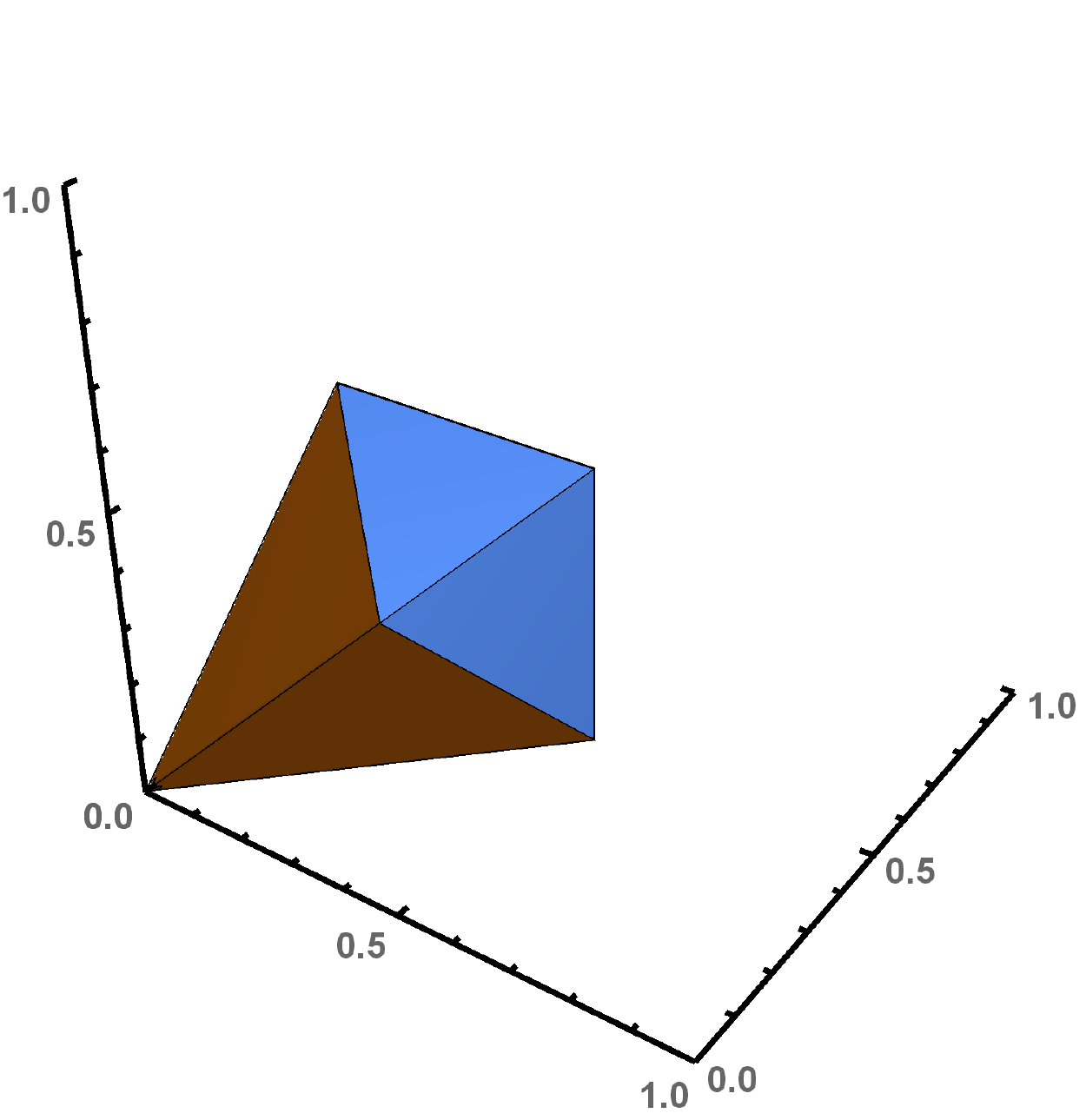}
  \caption{$d_{A}=d_{B}=2$ and $d_{C}=2$}
  \label{fig:B1}
\end{subfigure}%
\begin{subfigure}{.33\textwidth}
  \includegraphics[width=.9\textwidth]{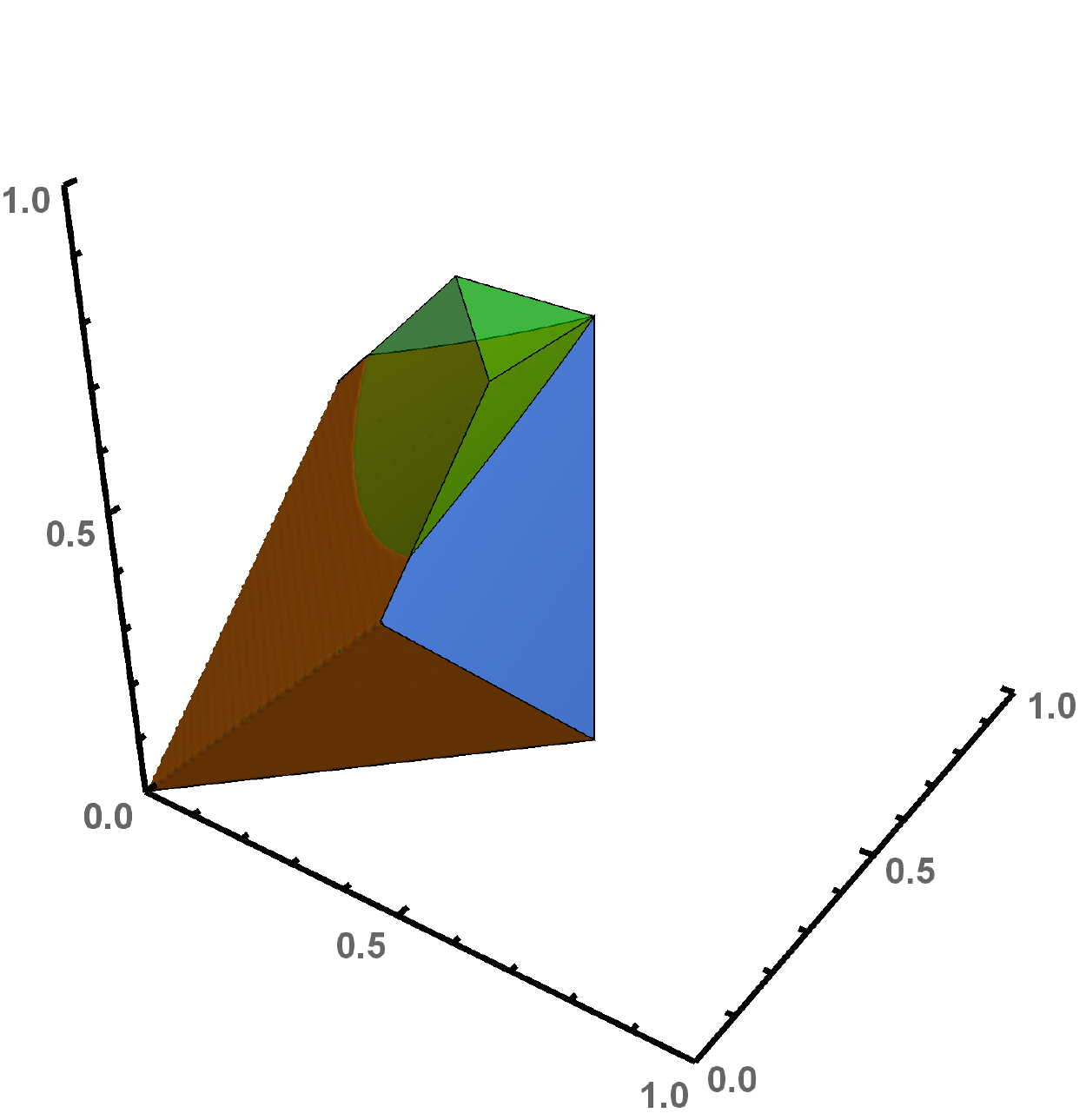}
  \caption{$d_{A}=d_{B}=2$ and $d_{C}=4$}
  \label{fig:B2}
\end{subfigure}
\begin{subfigure}{.33\textwidth}
  \centering
  \includegraphics[width=.9\linewidth]{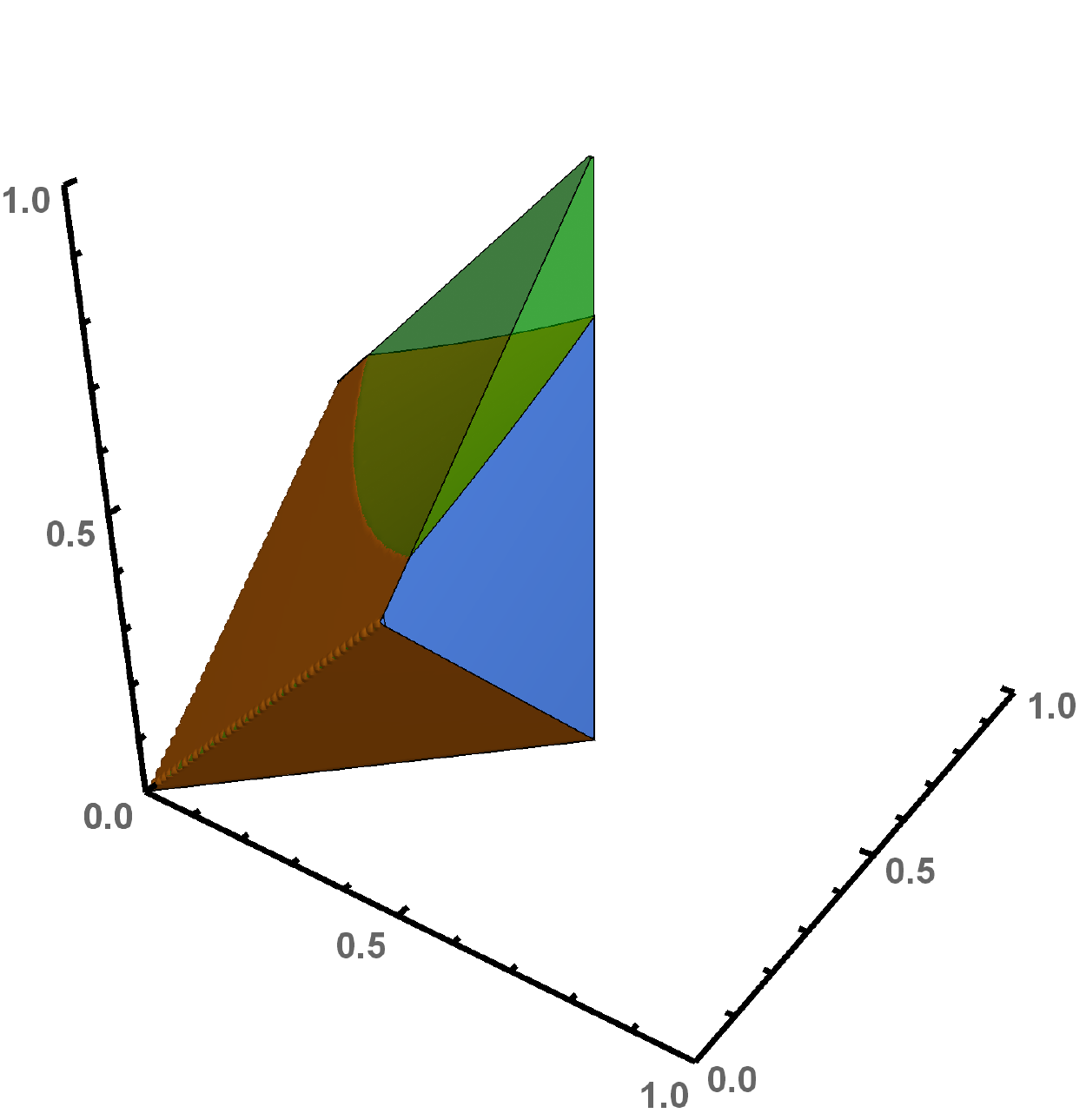}
  \caption{$d_{A}=d_{B}=2$ and $d_{C}=100$}
  \label{fig:B3}
\end{subfigure}
\caption{Every plotted point in the orange-blue body corresponds to a triple $\left(\mathcal{S}_{L}(A),\mathcal{S}_{L}(B),\mathcal{S}_{L}(C)\right)$ of a tripartite systems marginal entropies that are admissible in both subadditivity and generalized pseudo-additivity.
The green region  is depicted transparent and contains the points that are only admissible in subadditivity but not generalized pseudo-additivity. 
Thus the green region marks the region where generalized pseudo-additivity is sharper than (weak) subadditivity.}
\label{fig:B}
\end{figure*}
For a more geometric picture of the entropy space in the tripartite scenario refer to (see \cref{fig:A}).
There the relation of the marginals $\mathcal{S_{L}}(A)$ and $\mathcal{S_{L}}(B)$ to $\mathcal{S_{L}}(AB)$ is plotted.
We have depicted the attainable value of $\mathcal{S_{L}}(AB)$ for subadditivity and our generalized pseudo-additivity. 
A lower value on the $Z$-axis corresponds to the inequality being sharper with respect to the corresponding pair of marginal entropies $\mathcal{S_{L}}(A)$ and $\mathcal{S_{L}}(B)$.

The region of validity for subadditivity resembles a cube cut apart diagonally (see \cref{fig:A1}). 
The cut reveals a facet. 
In contrast the upper surface delimiting the generalized pseudo-additivity curves inward (see \cref{fig:A3}).
Together with the fact that edges are explicitly realizable it furthermore proves the non-convexity of the admissible entropy manifold.
With respect to standard subadditivity, we find a non-trivial behaviour that depends on dimensional factors and on the exact values of the involved entropic quantities. 
We have plotted as a dashed green line the intersection of both inequalities. 
The regions above the dashed line are stronger than subadditivity, while those towards the bottom and close to the origin are weaker (see \cref{fig:A1,fig:A3}). 
This shows that depending on the involved state one or the other relation may be sharper.
The exact size of these regions depends on the exact interrelations of the involved dimensions. 

To analyze the dimensional dependence of standard subadditivity and generalized pseudo-additivity refer to \cref{fig:B}.
In \cref{fig:B} we have plotted the body of admissible marginal entropies that fulfill all three equations given by subadditivity (green) and those that as well satisfy the three equations given by permutations of \cref{eq:nonlinqentropy} (orange-blue).
If all dimensions are equal, generalized pseudo-additivity (see \cref{eq:nonlinqentropy}) is not stronger than subadditivity. 
This is why \cref{fig:B1} only shows the cone defined by subadditivity. In fact \cref{fig:B1} represents a zoomed out version of \cref{fig:A1}.

Already for $d_{A}=d_{B}=2$ and $d_{C}=3$ generalized pseudo-additivity is sharper for some regions. 
The upper blue facet of the body in \cref{fig:B1} curves inward with increasing $d_{C}$ in total analogy to \cref{fig:A3}. 
In \cref{fig:B2,fig:B3} the upper part of the body turns orange indicating that it has become a non-linear surface.

The region where \cref{eq:nonlinqentropy} is stronger than subadditivity grows with $|d_{i}-d_{j}|$. 
In fact the cone described by only subadditivity grows with $|d_{i}-d_{j}|$, while the one defined by pseudo-additivity seems to saturate at $\min(d_{i},d_{j})^{2}$.
This is visible in \cref{fig:B2,fig:B3}, where the orange-blue body stopped growing. 
 
One can sum up, that generalized pseudo-additivity is relevant for systems with asymmetric dimensions. 
It clearly rules out the possibility to rely on purely linear descriptions, as the extremal points between the non-linear inequality are realizable and the non-linear surface in \cref{fig:B2,fig:B3} is slightly non-convex.

\section{Conclusion}
\label{sec:Conclusion}
\noindent
We have demonstrated that the correlation tensor formalism is a powerful tool to derive dimension-dependent statements about two seemingly different areas of research: monogamy relations and entropy inequalities.

First, we have introduced the \emph{split Bloch basis}, a sparse Bloch representation for quantum states with low-dimensional support. While all physical quantum states are of course expected to be full rank and thus an empty kernel, this tool can nonetheless be very useful for theoretical techniques that make use of purification and Schmidt decompositions. 

Using this representation we have shown that a natural monotone for quantifying correlations indeed exhibits monogamy in arbitrary dimensions. In particular, any amount of bipartite correlation between them non-trivially restricts the correlations of any external party with the system and maximal correlation implies decoupling from any external party.

Complementary, by using the very same techniques with some slightly different choices, we derived a number of inequalities for the well-known linear (or Tsallis 2-)entropy. 
First, we have a new linear inequality, a dimension-dependent analogue to strong subadditivity.
Second and maybe more interesting, we provide a non-linear but simple inequality for the linear entropy. 
 
We find that, while in the case of symmetric dimension subadditivity is strictly stronger than our inequality, it is nonetheless sharper for asymmetric dimensions.

An open question is whether our results could be made sharper. 
The key technique in all our theorems is the relation of the purity $\Tr(\rho^{2})$ to correlation tensor norms. Typically we employ purity here, however a sharper bound may lead to better results.
Possible candidates could be found in either \cite{JensMonogamy} or \cite{JensStateInverter}, where more involved relations for $\Tr(\rho^{2})$ are given. 
Similarly, in \eqref{TAB0} we simply set $\norm{T^{AB}}^2$ to zero for the derivation of the generalized pseudo-additivity. It may be that a sharper bound is possible here. 
Furthermore, the reliance on the connection between $\Tr(\rho^{2})$ and the Bloch parameterization during the proof of the generalized subadditivity limits the relation to the linear entropy. 
It would be certainly desirable to obtain a similarly general result for all $q$-entropies. Another interesting direction may be the derivation of non-trivial restrictions for correlation tensor elements from q-entropy inequalities.

It is still unclear if a complete geometric classification is possible, but it will be certainly more complicated than the entropy cones of the von Neumann entropy.
Still, one could hope to at least find a more complex body that contains all entropic relations at once.
\section{Acknowledgments}
\noindent
We especially thank Felix Huber for pointing out an error in the first arxiv version.
We thank Nicolai Friis, Milan Mosonyi and Matej Pivoluska for their comments.
PA, CK and MH acknowledge  support of funding from the Austrian Science Fund (FWF) through the START project Y879-N27 and the joint Czech-Austrian project MultiQUEST (I 3053-N27 and GF17-33780L).
CK especially thanks Babsi for being the best and dearest and excuses for not having had the opportunity to thank Barbara Weberndorfer accordingly in his thesis. He does sincerely hope time travel may one day allow to correct this error.

\newpage

\appendix
\section{Monogamy of Correlation Relations}
\label{Appendix:Monogamy}
\setcounter{lemma}{3}

\begin{lemma}[The Schmidt decomposition]
For pure states $\left|\psi\right\rangle_{AB}\in \mathcal{H}^{d_A}\otimes\left(\mathcal{H}^{d_A}\oplus\mathcal{H}^{d_B-d_A}\right)$ with $d_B>d_A$ we find the Schmidt decomposition as:
\begin{align}
 \left(
    \left|\psi\right\rangle
    \left\langle\psi\right|
 \right)_{AB}
    =  
    \sum_{i,j=0}^{\mathclap{\min(d_A,d_B)-1}}
        \left\langle
            \lambda^A_i
                \otimes
            \lambda^B_j 
        \right\rangle
        \lambda^A_i
        \otimes
        \lambda^B_j
\end{align}

\end{lemma}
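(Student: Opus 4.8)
The plan is to reduce the operator identity to the ordinary vector Schmidt decomposition of $\left|\psi\right\rangle_{AB}$ and then exploit the freedom in choosing the split Bloch basis to align its ``small'' sector on $B$ with the Schmidt support. First I would write the standard Schmidt decomposition $\left|\psi\right\rangle_{AB}=\sum_{k=0}^{r-1}\sqrt{p_k}\,\left|e_k\right\rangle_A\otimes\left|f_k\right\rangle_B$ with $r=\rank(\rho_A)\le\min(d_A,d_B)=d_A$ and orthonormal Schmidt vectors. The crucial observation is that $\left|\psi\right\rangle$ has support on the $B$ factor only inside the at most $d_A$-dimensional subspace $V_B=\Span\left(\left|f_0\right\rangle,\dots,\left|f_{r-1}\right\rangle\right)$, since $\rho_B=\Tr_A\left(\left|\psi\right\rangle\left\langle\psi\right|\right)=\sum_k p_k\left|f_k\right\rangle\left\langle f_k\right|$ is supported there.

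Next I would fix the split. Choosing a computational basis of $\mathds{C}^{d_B}$ whose first $d_A$ vectors are $\left|f_0\right\rangle,\dots,\left|f_{d_A-1}\right\rangle$ (completing $\{\left|f_k\right\rangle\}_{k<r}$ to an orthonormal $d_A$-tuple), I build the split Bloch basis with $c=d_A$, so that $\mathcal{H}^{d_B}=\mathcal{H}^{d_A}\bm{\odot}\mathcal{H}^{d_B-d_A}$ with $\mathcal{H}^{d_A}=\Span(\{\lambda^B_j\})$ the operator space of $V_B$. By construction every $\lambda^B_j$ acts only inside $V_B$, whereas every remainder operator $\mu^B_j$ has at least one index in the complement $V_B^\perp$.

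The main step is then to show that only the matched sector survives. Since $\rho_{AB}=\left|\psi\right\rangle\left\langle\psi\right|$ is supported in $\mathcal{H}^{d_A}_A\otimes V_B$, any operator of the form $\lambda^A_i\otimes\mu^B_j$ — which necessarily connects $V_B$ with $V_B^\perp$ on the $B$ side — has vanishing overlap, i.e.\ $\tilde{T}(\rho_{AB})_{ij}=\left\langle\lambda^A_i\otimes\mu^B_j\right\rangle=\Tr\left(\rho_{AB}\,\lambda^A_i\otimes\mu^B_j\right)=0$. Hence, expanding $\rho_{AB}$ in the split Bloch basis as in \cref{obs:split_state}, the entire $\tilde{T}$-block drops out and only the $\lambda^A_i\otimes\lambda^B_j$ terms remain, which is precisely the claimed decomposition; the coefficient $\left\langle\lambda^A_0\otimes\lambda^B_0\right\rangle=\Tr\left(\rho_{AB}\,\bar{\mathds{1}}_{d_A}\otimes\bar{\mathds{1}}_{d_A}\right)=1$ because $\rho_{AB}$ is normalized and lives inside the matched subspace.

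I expect the bookkeeping around the split Bloch basis to be the only real obstacle: one has to verify carefully that the three families $\{\lambda^B_j\}$, $\{\omega^B_j\}$ and $\{\nu^B_j\}$ partition cleanly, so that operators supported on $V_B$ have no component along the remainder and cross operators $\mu^B_j$ — which is exactly the orthogonality established when the split basis was introduced. Everything else is the standard Schmidt decomposition combined with the linearity of the Bloch expansion, so no genuinely new estimate is required.
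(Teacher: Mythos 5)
Your proof is correct, and it reaches the lemma by a genuinely different route than the paper. The paper's proof is \emph{constructive}: it writes $\left(\left|\psi\right\rangle\left\langle\psi\right|\right)_{AB}=\sum_{i,j}c_ic_j\left|ii\right\rangle\left\langle jj\right|$ and then explicitly reassembles each Schmidt-basis operator from split Bloch elements — the projectors $\left|i\right\rangle\left\langle i\right|$ as combinations of the diagonal elements, and $\left|ii\right\rangle\left\langle jj\right|+\left|jj\right\rangle\left\langle ii\right|$ from the index-matched products $\lambda^A_{ij}\otimes\lambda^B_{ij}$ and $\hat{\lambda}^A_{ij}\otimes\hat{\lambda}^B_{ij}$ — ending with the same unitary-alignment remark you begin with. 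You instead argue \emph{structurally}: after aligning the split so that the first sector of $B$ contains the Schmidt support $V_B$, you use $\rho_{AB}=\left(\mathds{1}\otimes P_{V_B}\right)\rho_{AB}\left(\mathds{1}\otimes P_{V_B}\right)$ together with $P_{V_B}\,\mu^B_j\,P_{V_B}=0$ (valid for both the $\omega$-type and the $\nu$-type remainder operators) to kill every coefficient $\left\langle\lambda^A_i\otimes\mu^B_j\right\rangle$, including $i=j=0$, and then invoke completeness and orthogonality of the split basis. Your route is shorter, avoids the explicit Gell-Mann algebra (which is error-prone: with the paper's own expansions, recovering $\left|ii\right\rangle\left\langle jj\right|+\left|jj\right\rangle\left\langle ii\right|$ actually requires the \emph{difference} $\lambda^A_{ij}\otimes\lambda^B_{ij}-\hat{\lambda}^A_{ij}\otimes\hat{\lambda}^B_{ij}$ rather than the sum written there), extends verbatim to mixed states whose $B$-marginal has support of dimension at most $d_A$, and proves exactly what the lemma is later invoked for, namely the vanishing of the entire $\mu$-sector ($\Delta=0$ and $c_0^\prime=0$ in the proof of \cref{thm:2-part-less-3-part}). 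The trade-off is that the constructive proof exhibits more: it identifies which matched-sector coefficients are nonzero (products of diagonal elements, and the index-matched pairs with coefficients $c_ic_j$), which is what supports reading the restricted index range in the lemma's display as a genuine operator Schmidt decomposition; your argument a priori allows all matched-sector products $\lambda^A_i\otimes\lambda^B_j$ to appear. Since the paper only ever uses the lemma through the vanishing of the cross terms, your proof is fully adequate for its purpose.
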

\begin{proof}
\label{pr:Schmidt_decom}
We will simply prove the above statement by explicitly constructing the standard Schmidt basis from our Split Bloch basis:
First let us state the standard Schmidt decomposition for pure states $\left|\psi\right\rangle_{AB}=\sum_{i=0}^{\min(d_A,d_B)-1}c_i|ii\rangle$ in operator form:
\begin{align}
    \left(
    \left|\psi\right\rangle
    \left\langle\psi\right|
 \right)_{AB}
 =
 \sum_{i,j=0}^{\mathclap{\min(d_A,d_B)-1}} c_i c_j \left|ii\right\rangle\left\langle jj\right|
\end{align}
Let us construct first the diagonal $\left|i\right\rangle\left\langle i\right|$  in the local Bloch bases.
Note that the construction in the split Bloch basis and the canonical Gellmann matrix basis is analogue, since the $\lambda_{ij}$ and $\hat{\lambda}_{ij}$ of the split Bloch basis are simply the embedded Gellmann matrices in the bigger space. 
For simplicity we will concatenate the index of the diagonal Bloch basis elements, e.g. $\lambda_{ii}=\lambda_i$ 
\begin{align}
   \sqrt{d} \left|0\right\rangle\left\langle 0\right|
    &= 
    \lambda_0
    +
    \frac{1}{d-1}\lambda_{d-1}
    +
    \sum_{k=2}^{d-1} \alpha_{d-k} \lambda_{d-k}
    + \alpha_{d-i-1}\lambda_i 
    &
    \\
    \sqrt{d}
    \left|i\right\rangle\left\langle i\right|
    &= 
    \frac{1}{i}
    \left(
    \lambda_0
    +
    \frac{1}{d-1}\lambda_{d-1}
    +
    \sum_{k=2}^{d-i-1} \alpha_{d-k} \lambda_{d-k}
    - \alpha_{d-i-1}\lambda_i 
    \right)
    &\,|\, i\in \{1,\dots,d-2\}
    \\
    \sqrt{d}
    \left|d-1\right\rangle\left\langle d-1\right|
    &= 
    \frac{1}{d-1}
    \left(
    \lambda_0
    +
    \lambda_{d-1}
    \right)
\end{align}
with $\alpha_{d-k}=\Pi_{j=1}^{k-1}(1+\frac{1}{d-j})/d-k$ and $d$ the local dimension of the Hilbert space.
The diagonal elements of the bipartite basis are simply found as the tensor product $\left|ii\right\rangle\left\langle ii\right|=\left|i\right\rangle\left\langle i\right|_A\otimes\left|i\right\rangle\left\langle i\right|_B$ for all $i\in\{0,\dots,d_{min}-1\}$.
Where $d_{min}=\min(d_A,d_B)$.
For the off-diagonal elements we find:
\begin{align}
     \left|ii\right\rangle\left\langle jj\right|
     +\left|jj\right\rangle\left\langle ii\right|
    &=
    \frac{1}{2d_{min}}
        \left(
            \lambda_{ij}^A\otimes\lambda_{ij}^B
            +
            \hat{\lambda}_{ij}^A\otimes\hat{\lambda}_{ij}^B
        \right)
        &\,|\, i\neq j
    \\
    \shortintertext{this is true since:}
    \lambda_{ij}^A\otimes\lambda_{ij}^B
    &=
    \left|ij\right\rangle\left\langle ji\right|
    +\left|ji\right\rangle\left\langle ij\right|
    +\left|ii\right\rangle\left\langle jj\right|
    +\left|jj\right\rangle\left\langle ii\right|
    \\
    \shortintertext{and:}
    \hat{\lambda}_{ij}^A\otimes\hat{\lambda}_{ij}^B
    &=
    \left|ij\right\rangle\left\langle ji\right|
    +\left|ji\right\rangle\left\langle ij\right|
    -\left|ii\right\rangle\left\langle jj\right|
    -\left|jj\right\rangle\left\langle ii\right|
\end{align}
Note that we already assumed that the computational bases in which we expressed the Schmidt basis and the split Bloch basis are the same. 
This is not necessarily the case, however, any two computational bases are connected by unitary transformations.
\end{proof}
\setcounter{theorem}{0}
\begin{theorem}
Let $\rho_{ABE} \in \mathcal{H}=\mathcal{H}^{d^2}\otimes\mathcal{H}^{d_E}$ be an arbitrary tripartite state owned by $A,B$ and $E$ with local dimensions $d_A=d_B=d$ and $d_E$:
\begin{enumerate}[(i):]
    \item  The correlation of a composite system $\rho_{AB}$  with an arbitrary system $\rho_E$ limits the correlation of its marginals with the same:
        \begin{align}
        \mathcal{T}_{A|E}\left(\rho_{ABE}\right) 
        + \mathcal{T}_{B|E}\left(\rho_{ABE}\right) 
        &\leq \frac{g_{AB|E}}{\min\left(g_{A|E},g_{B|E}\right)} \mathcal{T}_{AB|E}\left(\rho_{ABE}\right)  
        \\
        \shortintertext{
    \item The correlation of any state $\rho_{AB} \in \mathcal{H}^{d^2}$ with an arbitrary state $\rho_E$ is restricted by:
        }
            \mathcal{T}_{AB|E}
            &\leq
            \frac{
                d^{4}  
                -1
                -2
                \left(
                    \left\|T^{A}\right\|^2+\left\|T^{B}\right\|^2
                \right)
                -2 g_{A|B} \mathcal{T}_{A|B}
            }
            {g_{AB|E}}
        \end{align}

\end{enumerate}
\end{theorem}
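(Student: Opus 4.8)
The plan is to translate both inequalities into statements about correlation-tensor norms and then evaluate everything in the worst case where Eve holds a purification of $\rho_{AB}$, i.e. $d_E=d^2$. Two elementary facts drive the argument. First, sorting the composite Bloch operators $\lambda^A_i\otimes\lambda^B_j$ according to whether only the $A$-index, only the $B$-index, or both are nonzero yields the decomposition
\[ \norm{T^{(AB)E}}^{2}=\norm{T^{AE}}^{2}+\norm{T^{BE}}^{2}+\norm{T^{ABE}}^{2}. \]
Second, for equal local dimensions $d_A=d_B=d$ the maximisation in \cref{def:mono} is obsolete, so $g_{A|B}\mathcal{T}_{A|B}=\norm{T^{AB}}^{2}$ exactly; more generally each marginal monotone only keeps a $d_{\min}$-restricted, basis-maximised slice of a full (basis-independent) two-body norm, giving $g_{A|E}\mathcal{T}_{A|E}\le\norm{T^{AE}}^{2}$ and $g_{AB|E}\mathcal{T}_{AB|E}\le\norm{T^{(AB)E}}^{2}$, with equality in the last relation when the state is pure and $d_E=d^2$ by the Schmidt symmetry of \cref{lem:Schmidt_decom}.

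For (ii) I would first argue that the purification is extremal: any $\rho_{ABE}$ is obtained from a purification $\left|\psi\right\rangle_{ABE'}$ with $d_{E'}=d^2$ by tracing out part of $E'$, so \cref{lem:traceoutlemma}(ii) gives $g_{AB|E}\mathcal{T}_{AB|E}(\rho_{ABE})\le g_{AB|E'}\mathcal{T}_{AB|E'}(\left|\psi\right\rangle)=\norm{T^{(AB)E'}}^{2}$, while $\norm{T^A}^{2}$, $\norm{T^B}^{2}$ and $\mathcal{T}_{A|B}$ depend only on the fixed marginal $\rho_{AB}$. Applying \cref{lem:trace_squared} to the pure state, where $\Tr(\rho_{ABE'}^{2})=1$ and $d_\Sigma=d^4$, gives $\sum_{v}\norm{T^{v}}^{2}=d^4-1$. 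I would then eliminate the local term $\norm{T^E}^{2}$ using that $\rho_{E'}$ and $\rho_{AB}$ are isospectral; through \cref{lem:trace_squared} this reads $\norm{T^E}^{2}=\norm{T^A}^{2}+\norm{T^B}^{2}+\norm{T^{AB}}^{2}$. Substituting into the decomposition collapses the surviving terms to
\[ \norm{T^{(AB)E'}}^{2}=d^4-1-2\norm{T^A}^{2}-2\norm{T^B}^{2}-2\norm{T^{AB}}^{2}, \]
and replacing $\norm{T^{AB}}^{2}$ by $g_{A|B}\mathcal{T}_{A|B}$ is exactly the claimed bound.

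For (i) I would rearrange the target into $\min\left(g_{A|E},g_{B|E}\right)\left(\mathcal{T}_{A|E}+\mathcal{T}_{B|E}\right)\le g_{AB|E}\mathcal{T}_{AB|E}$. The left-hand side is at most $g_{A|E}\mathcal{T}_{A|E}+g_{B|E}\mathcal{T}_{B|E}\le\norm{T^{AE}}^{2}+\norm{T^{BE}}^{2}$, so it suffices to show $g_{AB|E}\mathcal{T}_{AB|E}\ge\norm{T^{AE}}^{2}+\norm{T^{BE}}^{2}$. Passing again to the purified, equal-dimension picture, the Schmidt symmetry makes $g_{AB|E}\mathcal{T}_{AB|E}=\norm{T^{(AB)E}}^{2}$, and the decomposition of the first paragraph yields $\norm{T^{(AB)E}}^{2}\ge\norm{T^{AE}}^{2}+\norm{T^{BE}}^{2}$ simply by discarding the nonnegative genuinely tripartite contribution $\norm{T^{ABE}}^{2}$.

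The index bookkeeping in the decomposition and the $\Tr(\rho^{2})$ counting are routine. I expect the real work to lie in two places. The decisive one is controlling the maximisation and the $d_{\min}$-restriction built into \cref{def:mono}: $g_{AB|E}\mathcal{T}_{AB|E}$ coincides with the full composite norm only in the purified equal-dimension situation, so I must justify rigorously that the purification is genuinely the extremal extension of a given $\rho_{AB}$ and, for (i), that the reduction survives the fact that both sides of the inequality involve $E$; the freedom in the normalisation constants $g_{\ast|\ast}$, which are deliberately left unspecified, has to be tracked so that enlarging Eve's system does not spoil the inequalities. The second, more mechanical subtlety is the isospectrality step $\Tr(\rho_{E'}^{2})=\Tr(\rho_{AB}^{2})$, which couples Eve's local purity to the joint correlations of Alice and Bob and is responsible for the factor of two in front of $\norm{T^A}^{2}$, $\norm{T^B}^{2}$ and $g_{A|B}\mathcal{T}_{A|B}$; getting it right depends on treating $\norm{T^E}^{2}$ and $\norm{T^{AB}}^{2}$ symmetrically.
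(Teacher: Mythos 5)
Your part (i) is where the proposal genuinely breaks. You reduce the claim to ``$g_{AB|E}\mathcal{T}_{AB|E}\geq\norm{T^{AE}}^{2}+\norm{T^{BE}}^{2}$'' (full, basis-independent norms of $\rho_{ABE}$) and then try to prove this by ``passing to the purified, equal-dimension picture''. This fails directionally: the only tool relating $\rho_{ABE}$ to its purification $|\psi\rangle_{AB\tilde{E}}$ is \cref{lem:traceoutlemma}(ii), which gives $g_{AB|E}\mathcal{T}_{AB|E}(\rho_{ABE})\leq g_{AB|\tilde{E}}\mathcal{T}_{AB|\tilde{E}}(|\psi\rangle)$, i.e.\ an \emph{upper} bound on the quantity you need to bound from \emph{below}; proving your inequality for the purification therefore says nothing about $\rho_{ABE}$ (and the tensors $T^{AE},T^{BE}$ themselves change when $E$ is replaced by $\tilde{E}$). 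Moreover the reduction target itself can fail when $d_E\neq d^{2}$: for $\rho_{ABE}=\phi^{+}_{AE}\otimes\tfrac{1}{d}\mathds{1}_{B}$ one has $\norm{T^{AE}}^{2}+\norm{T^{BE}}^{2}=d^{2}-1$, while the $d_{\min}$-restriction built into \cref{def:mono} forces every split-Bloch block of $AB$ to see only a diluted fraction of the $A$--$E$ correlations, so the restricted, maximized quantity $g_{AB|E}\mathcal{T}_{AB|E}$ stays strictly below $d^{2}-1$. The paper's proof of (i) needs neither purification nor equal dimensions: it reads $g_{AB|E}\mathcal{T}_{AB|E}$, via the multipartite prescription following \cref{def:mono}, as the same (restricted, maximized) sum $\norm{T^{AE}}^{2}+\norm{T^{BE}}^{2}+\norm{T^{ABE}}^{2}$ whose first two terms upper bound $g_{A|E}\mathcal{T}_{A|E}+g_{B|E}\mathcal{T}_{B|E}$, and the inequality follows by discarding $\norm{T^{ABE}}^{2}\geq0$.

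Your part (ii) follows the paper's route (purification, purity via \cref{lem:trace_squared}, isospectrality of the two marginals of a pure state, descent via \cref{lem:traceoutlemma}(ii)), and your evaluation on the purification is correct and even cleaner than the paper's, since choosing $d_{E'}=d^{2}$ removes all split-basis bookkeeping. The gap is the descent step, which you yourself flag: a generic extension $\rho_{ABE}$ of $\rho_{AB}$ is \emph{not} obtained ``by tracing out part of $E'$'' from a $d^{2}$-dimensional purification (if $\rho_{AB}$ is pure and $d_E>d^{2}$, no subsystem of $E'$ can even host $E$); one needs $\rho_{ABE}=\Tr_{F}\left[\left(\mathds{1}_{AB}\otimes V\right)|\psi\rangle\langle\psi|_{ABE'}\left(\mathds{1}_{AB}\otimes V^{\dagger}\right)\right]$ for an isometry $V:E'\to E\otimes F$, and \cref{lem:traceoutlemma}(ii) compares a subsystem with a supersystem \emph{within one state}, so it cannot be applied across the two different states $\rho_{ABE}$ and $|\psi\rangle_{ABE'}$. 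The paper closes exactly this hole by purifying $\rho_{ABE}$ itself to $|\psi\rangle_{AB\tilde{E}}$ with $\tilde{E}=EE'$ (so $E$ is literally a subsystem and the lemma applies on the pure state), then using the split Bloch basis and \cref{lem:Schmidt_decom} to show that the maximization in \cref{def:mono} lets $g_{AB|\tilde{E}}\mathcal{T}_{AB|\tilde{E}}$ attain your value $d^{4}-1-2\left(\norm{T^{A}}^{2}+\norm{T^{B}}^{2}+g_{A|B}\mathcal{T}_{A|B}\right)$ for \emph{any} $d_{\tilde{E}}$, and finally invoking marginal-independence of $\mathcal{T}_{AB|E}$. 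Supplying this isometric-invariance argument would make your (ii) complete; part (i), however, needs to be redone along the paper's definitional lines rather than via purification.
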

\begin{proof}
\label{pr:2-part-less-3-part}
\begin{enumerate}[label={\text{add} (\roman*):},wide, labelwidth=!, labelindent=0pt]
\item
By the definition of $\mathcal{T}_{A|E}$, $\mathcal{T}_{B|E}$ and $\mathcal{T}_{AB|E}$ we have
\begin{align}
\min\left(g_{A|E},g_{B|E}\right)\left( \mathcal{T}_{A|E} +  \mathcal{T}_{B|E}\right) 
& \leq
g_{A|E}\; \mathcal{T}_{A|E} + g_{B|E}\; \mathcal{T}_{B|E} \\
& =
\left\|T^{AE}\right\|^2+\left\|T^{BE}\right\|^2 
\leq  
\left\|T^{AE}\right\|^2+\left\|T^{BE}\right\|^2 +
\left\|T^{ABE}\right\|^2 
=  g_{AB|E} \mathcal{T}_{AB|E} 
\end{align}
\item This proof makes use of the split Bloch decomposition. First we need to state that for every $\rho_{ABE}$ with a fixed marginal $\rho_{AB}=\sigma$ there exists a purification $|\psi_{ABEE'}\rangle$.  A purification is however not unique, i.e. all purifications are connected by an isometry: $\left|\psi^\prime\right\rangle_{ABEE'} = \mathds{1}\otimes V \left|\psi\right\rangle_{ABEE'} \in \mathcal{H}^{d^2}\otimes \mathcal{H}^{d^\prime>d_{min}}$ with the isometry 
$V:\mathcal{B}\left(\mathcal{H}^{d_{min}},\mathcal{H}^{d^\prime}\right)$.

Using these considerations allows us to use the Schmidt decomposition for pure states as well as the fact that $\Tr\left(\left(\left|\psi\right\rangle\left\langle\psi\right|\right)_{ABEE'}^2\right)=1$.
Furthermore it is very useful to use the split Bloch basis. 
For sake of brevity let us introduce the shorthand subscript $EE'\rightarrow \tilde{E}$ and thus $c_0=\left\langle\lambda_0^{AB}\otimes\lambda^{\tilde{E}}_0\right\rangle^2$ and $c_0^\prime=\left\langle\lambda_0^{AB}\otimes\mu^{\tilde{E}}_0\right\rangle^2$.
\begin{align}
\Tr
\left(
    \left(
    \left|\psi\right\rangle\left\langle\psi\right|
    \right)_{AB\tilde{E}}^2
\right)
    =
    1 
    &=
    \frac{1}{d^4}
        \left(
            c_0
            +\norm{T^A}^2
            +\norm{T^B}^2
            +\norm{T^{\tilde{E}}_{SD}}^2
            +\norm{T^{AB}}^2
            +\norm{T^{A\tilde{E}}_{SD}}^2
            +\norm{T^{B\tilde{E}}_{SD}}^2
            +\norm{T^{AB\tilde{E}}_{SD}}^2
                \vphantom{\norm{\tilde{T}^{\tilde{E}}}^2}
        \right)
    \nonumber
    \\
    & \phantom{=}\, +
        \frac{1}{\left(d_{\tilde{E}}-d^2\right)d^2}
        \left(
            c_0^\prime
            +\norm{\tilde{T}^{\tilde{E}}}^2
            +\norm{\tilde{T}^{A\tilde{E}}}^2
            +\norm{\tilde{T}^{B\tilde{E}}}^2
            +\norm{\tilde{T}^{AB\tilde{E}}}^2
        \right)
    \end{align}
    By simply rewriting this equation we find:
\begin{align}
    \norm{T^{A\tilde{E}}_{SD}}^2
    +\norm{T^{B\tilde{E}}_{SD}}^2
    +\norm{T^{AB\tilde{E}}_{SD}}^2
    &=
    d^4
    -
    \left(
        c_0
        +\norm{T^A}^2
        +\norm{T^B}^2
        +\norm{T^{\tilde{E}}_{SD}}^2
        +\norm{T^{AB}}^2
            \vphantom{\norm{\tilde{T}^{\tilde{E}}}^2}
    \right)
    \nonumber
    \\
    & \phantom{=}\, -
        \frac{d^2}{\left(d_{\tilde{E}}-d^2\right)}
        \left(
            c_0^\prime
            +\norm{\tilde{T}^{\tilde{E}}}^2
            +\norm{\tilde{T}^{A\tilde{E}}}^2
            +\norm{\tilde{T}^{B\tilde{E}}}^2
            +\norm{\tilde{T}^{AB\tilde{E}}}^2
        \right)
    \\
    \shortintertext{
Now we can use the Schmidt decomposition, i.e.:}
    \frac{1}{d^2}
        \left(
            1
            +\norm{T^A}^2
            +\norm{T^B}^2
            +\norm{T^{AB}}^2
                \vphantom{\norm{\tilde{T}^{\tilde{E}}}^2}
        \right)
    &=
    \frac{1}{d^2}
    \left(
        c_0
        +\norm{T^{\tilde{E}}_{SD}}^2
            \vphantom{\norm{\tilde{T}^{\tilde{E}}}^2}
    \right)
    +\frac{1}{\left(d_{\tilde{E}}-d^2\right)d^2}
        \left(
            c_0^\prime
            +\norm{\tilde{T}^{\tilde{E}}}^2
        \right)
    \\
    \norm{T^{\tilde{E}}_{SD}}^2
    &=
    1
    +\norm{T^A}^2
    +\norm{T^B}^2
    +\norm{T^{AB}}^2
    - c_0
    -\frac{1}{\left(d_{\tilde{E}}-d^2\right)}
        \left(
            c_0^\prime
            +\norm{\tilde{T}^{\tilde{E}}}^2
        \right)
    \\
    \shortintertext{Plugging this in above we find:}
    \norm{T^{A\tilde{E}}_{SD}}^2
    +\norm{T^{B\tilde{E}}_{SD}}^2
    +\norm{T^{AB\tilde{E}}_{SD}}^2
    &=
    d^4
    -
    \left(
        c_0
        +2\norm{T^A}^2
        +2\norm{T^B}^2
        +2\norm{T^{AB}}^2
        +1
        -c_0
            \vphantom{\norm{\tilde{T}^{\tilde{E}}}^2}
    \right)
    \nonumber
    \\
    & \phantom{=}\, -
    \underbrace{
    \frac{d^2-1}{\left(d_{\tilde{E}}-d^2\right)}
    \left(
        c_0^\prime
        +\norm{\tilde{T}^{\tilde{E}}}^2
    \right)
    -\frac{d^2}{\left(d_{\tilde{E}}-d^2\right)}
        \left(
            \norm{\tilde{T}^{A\tilde{E}}}^2
            +\norm{\tilde{T}^{B\tilde{E}}}^2
            +\norm{\tilde{T}^{AB\tilde{E}}}^2
        \right)}_{\Delta}
    \\
    \shortintertext{Due to \cref{lem:Schmidt_decom} we know there exists a basis such that $\Delta=0$, since the $\Tr\left(\rho_{AB}^2\right)$ is invariant under changing the local Bloch basis, this is achieved by maximizing the left hand side: }
    \max_{\{\lambda_i^{\tilde{E}}\}}\left(
    \norm{T^{A\tilde{E}}_{SD}}^2
    +\norm{T^{B\tilde{E}}_{SD}}^2
    +\norm{T^{AB\tilde{E}}_{SD}}^2
    \right)
    &=
    d^4
    - 1
    - 2
        \left(
            \norm{T^A}^2
            +\norm{T^B}^2
            +g_{A|B} \mathcal{T}_{A|B}
        \right)
    \\
    g_{AB|\tilde{E}}\mathcal{T}_{AB|\tilde{E}}\left(|\psi_{AB\tilde{E}}\rangle\langle\psi_{AB\tilde{E}}|\right)
    &=
    d^4
    - 1
    - 2
        \left(
            \norm{T^A}^2
            +\norm{T^B}^2
            +g_{A|B} \mathcal{T}_{A|B}
        \right)
\end{align}
Finally due to Lemma \ref{lem:traceoutlemma} (ii) we find:
\begin{align}
    g_{AB|E}\mathcal{T}_{AB|E}\left(\left(\left|\psi\right\rangle\left\langle\psi\right|
    \right)_{AB\tilde{E}}\right)
    &\leq
        g_{AB|\tilde{E}}\mathcal{T}_{AB|\tilde{E}}
        \left(\left(\left|\psi\right\rangle\left\langle\psi\right|
    \right)_{AB\tilde{E}}\right)
    =
    d^4
    - 1
    - 2
        \left(
            \norm{T^A}^2
            +\norm{T^B}^2
            +g_{A|B} \mathcal{T}_{A|B}
        \right)
    \\
    \shortintertext{and since $\mathcal{T}_{AB|E}$ is independent of $E'$ we can take the partial trace in the argument over $E'$, i.e. $\mathcal{T}_{AB|E}\left(\left(\left|\psi\right\rangle\left\langle\psi\right|
    \right)_{AB\tilde{E}}\right)=\mathcal{T}_{AB|E}\left(\Tr_E'\left(\left(\left|\psi\right\rangle\left\langle\psi\right|
    \right)_{AB\tilde{E}}\right)\right)=\mathcal{T}_{AB|E}\left(\rho_{ABE}\right)$.}
    g_{AB|E}\mathcal{T}_{AB|E}\left(\rho_{ABE}\right)
    &\leq
        d^4
        - 1
        - 2
            \left(
                \norm{T^A}^2
                +\norm{T^B}^2
                +g_{A|B} \mathcal{T}_{A|B}
            \right)
\end{align}
\end{enumerate}
\end{proof}

\setcounter{lemma}{5}
\begin{lemma}
\label{lem:lower-tensor_bound}
The sum of correlation tensor norms in \cref{thm:2-part-less-3-part} (ii) is bounded from above by: 
\begin{align*}
\norm{T^{A}}^{2} + \norm{T^{B}}^{2} \leq \min\left( 2d -2,\left(d^{2}-1\right)\left(1- \mathcal{T}_{A|B}\right)\right) 
\end{align*}
and from below by:
\begin{align*}
\norm{T^{A}}^{2} + \norm{T^{B}}^{2} \geq \max\left(0,\frac{d^{2}}{d_E}-1-\left(d^{2}-1\right) \mathcal{T}_{A|B}\right)
\end{align*}
\end{lemma}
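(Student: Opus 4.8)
The plan is to obtain both bounds as direct consequences of \cref{lem:trace_squared} applied to the marginals $\rho_A$, $\rho_B$ and $\rho_{AB}$, together with elementary purity estimates. The starting observation is that, since $d_A=d_B=d$, the maximization in \cref{def:mono} is obsolete and the sum there runs over all $i,j\in\{1,\dots,d^2-1\}$; hence $g_{A|B}\mathcal{T}_{A|B}=\norm{T^{AB}}^2$, the squared norm of the full two-body correlation tensor. Adopting the normalization $g_{A|B}=d^2-1$ announced in the main text, this reads $\norm{T^{AB}}^2=(d^2-1)\mathcal{T}_{A|B}$, which I will substitute throughout.

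For the upper bound I would combine two independent purity constraints. First, applying \cref{lem:trace_squared} to each single-party marginal gives $\Tr(\rho_A^2)=\frac{1}{d}(1+\norm{T^A}^2)$ and likewise for $B$; since $\Tr(\rho_A^2)\leq 1$ and $\Tr(\rho_B^2)\leq 1$ we get $\norm{T^A}^2\leq d-1$ and $\norm{T^B}^2\leq d-1$, so that $\norm{T^A}^2+\norm{T^B}^2\leq 2d-2$. Second, applying \cref{lem:trace_squared} to the joint state yields $\Tr(\rho_{AB}^2)=\frac{1}{d^2}(1+\norm{T^A}^2+\norm{T^B}^2+\norm{T^{AB}}^2)\leq 1$, whence $\norm{T^A}^2+\norm{T^B}^2\leq d^2-1-\norm{T^{AB}}^2=(d^2-1)(1-\mathcal{T}_{A|B})$ after the substitution above. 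Taking the smaller of the two bounds gives exactly the claimed $\min$.

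For the lower bound, the $0$ branch is immediate from nonnegativity of norms. The nontrivial branch again uses \cref{lem:trace_squared} on $\rho_{AB}$, but now together with a \emph{lower} bound on its purity. In the setting of \cref{thm:2-part-less-3-part} the environment $E$ has dimension $d_E$, so $\rho_{AB}$ admits a purification with a $d_E$-dimensional purifying system; consequently $\rank(\rho_{AB})\leq d_E$ and $\Tr(\rho_{AB}^2)\geq 1/d_E$. Rewriting this via \cref{lem:trace_squared} gives $1+\norm{T^A}^2+\norm{T^B}^2+\norm{T^{AB}}^2\geq d^2/d_E$, i.e. $\norm{T^A}^2+\norm{T^B}^2\geq \frac{d^2}{d_E}-1-(d^2-1)\mathcal{T}_{A|B}$, which is the second branch of the $\max$.

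The only genuinely delicate point is bookkeeping rather than ideas: one must be consistent about the normalization $g_{A|B}=d^2-1$ so that the substitution $\norm{T^{AB}}^2=(d^2-1)\mathcal{T}_{A|B}$ is legitimate, and one must justify the purity estimate $\Tr(\rho_{AB}^2)\geq 1/d_E$ by the rank argument, namely that a state purifiable with a $d_E$-dimensional ancilla has rank at most $d_E$ and that any state of rank $r$ satisfies $\Tr(\cdot^2)\geq 1/r$. Everything else is a direct application of \cref{lem:trace_squared}.
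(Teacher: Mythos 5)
Your proposal is correct and follows essentially the same route as the paper: both bounds are purity constraints read off from \cref{lem:trace_squared}, combined with the substitution $\norm{T^{AB}}^{2}=g_{A|B}\mathcal{T}_{A|B}=\left(d^{2}-1\right)\mathcal{T}_{A|B}$ and the single-party bounds $\norm{T^{A}}^{2},\norm{T^{B}}^{2}\leq d-1$. The only cosmetic difference is in the lower bound, where the paper obtains $\Tr\left(\rho_{AB}^{2}\right)\geq 1/d_E$ from the equality of complementary marginal purities $\Tr\left(\rho_{AB}^{2}\right)=\Tr\left(\rho_{E}^{2}\right)$ (dropping $\norm{T^{E}}^{2}\geq 0$), while you obtain it from $\rank\left(\rho_{AB}\right)\leq d_E$; both justifications, like the lemma itself, implicitly assume that $E$ purifies $\rho_{AB}$ (i.e.\ the global state is pure), exactly as in the scenario the paper uses this lemma for.
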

\begin{proof}
We will start by proving the upper bound for $\norm{T^{A}}^{2}+\norm{T^{B}}^{2}$.
A very simple connection can be found by using purity:
\begin{align}
d^{2} - 1 & \geq \norm{T^{A}}^{2} + \norm{T^{B}}^{2} + \norm{T^{AB}}^{2} \\
d^{2}-1- g_{A|B}\mathcal{T}_{A|B} & \geq \norm{T^{A}}^{2} + \norm{T^{B}}^{2}\\
\left(d^{2}-1\right)\left(1- \mathcal{T}_{A|B}\right) & = \norm{T^{A}}^{2} + \norm{T^{B}}^{2}
\end{align}
Now by varying $0\leq\mathcal{T}_{A|B}\leq 1$ we can estimate the entanglement Eve possesses. 
However due to the fact that the squared Euclidean norm of a correlation tensor is realized by a pure state,  another bound has to be considered:
\begin{align*}
2d -2 \geq\norm{T^{A}}^{2} + \norm{T^{B}}^{2}
\end{align*}
Thus we found an upper bound for:
\begin{align*}
\min\left( 2d -2,\left(d^{2}-1\right)\left(1- \mathcal{T}_{A|B}\right)\right) \geq\norm{T^{A}}^{2} + \norm{T^{B}}^{2}
\end{align*}
For the lower bound we will use:
\begin{align*}
    \frac{1}{d_E}\left(1+\norm{T^{E}}^2\right)&= 
    \frac{1}{d^{2}}\left(1+\norm{T^{A}}^2+\norm{T^{B}}^2+
    \norm{T^{AB}}^2\right)
    \\
    \frac{d^{2}}{d_E}-1&\leq\norm{T^{A}}^2+\norm{T^{B}}^2+
    \norm{T^{AB}}^2 
    \\
    \frac{d^{2}}{d_E}-1-g_{A|B}\mathcal{T}_{A|B} &\leq\norm{T^{A}}^2+\norm{T^{B}}^2 
    \\
    \frac{d^{2}}{d_E}-1-g\left(d^{2}-1\right)\mathcal{T}_{A|B} &\leq\norm{T^{A}}^2+\norm{T^{B}}^2 
\end{align*}
We know however that $\norm{T^{A}}^2+\norm{T^{B}}^2>0$ thus we find the lower bound as :
\begin{align*}
    \max\left(0,\frac{d^{2}}{d_E}-1-\left(d^{2}-1\right) \mathcal{T}_{A|B}\right)
\end{align*}
Which concludes our proof.
\end{proof}

\section{Entropy Inequalities}

\begin{theorem}
For a tripartite quantum system  $\rho_{ABC}$ we find the following entropy  inequality for the linear entropy $\mathcal{S}_{L}\left(\rho_{ABC}\right)=1- \Tr\left(\rho_{ABC}^{2}\right)$:
\begin{align}
\mathcal{S}_{L}\left(\rho_{ABC}\right)
+ \frac{1}{d_Ad_B} \mathcal{S}_{L}\left(\rho_{C}\right)
&\leq
\frac{1}{d_B}\mathcal{S}_{L}\left(\rho_{AC}\right)
+ \frac{1}{d_A} \mathcal{S}_{L}\left(\rho_{BC}\right)
+\frac{d_A d_B+1-d_A-d_B}{d_A d_B} 
\end{align}
\begin{proof}
\label{pr:dim_sub_add}
We rewrite the entire system into the relevant subsystems
\begin{align}
d_Ad_Bd_C \Tr\left(\rho_{ABC}^{2}\right)
&= 
		1 
		+ \norm{{T}^{A}}^{2}
		+ \norm{{T}^{B}}^{2}
		+ \norm{{T}^{C}}^{2}
		+ \norm{{T}^{AB}}^{2}
		+ \norm{{T}^{AC}}^{2}
		+ \norm{{T}^{BC}}^{2}
		+ \norm{{T}^{ABC}}^{2}
\\
&= \left(
		1 
		+ \norm{{T}^{A}}^{2}	
		+ \norm{{T}^{C}}^{2}
		+ \norm{{T}^{AC}}^{2}
	\right)
+  \left(
		1 
		+ \norm{{T}^{B}}^{2}	
		+ \norm{{T}^{C}}^{2}
		+ \norm{{T}^{BC}}^{2}
	\right)
-	\left(
		1 + \norm{{T}^{C}}^{2}
	\right)
\\
&=
d_A d_C \Tr\left(\rho_{AC}^{2}\right)
+ d_B d_C \Tr\left(\rho_{BC}^{2}\right)
- d_C\; \Tr\left(\rho_{C}^{2}\right). 
\end{align}
We can rewrite $\Tr(\rho^{2})$ into $S_L$ by its definition (see \cref{Def:SL}) and obtain 
\begin{align}
1-\mathcal{S}_{L}\left(\rho_{ABC}\right)
+ \frac{1}{d_Ad_B}\left(1-\mathcal{S}_{L}\left(\rho_{C}\right)\right) 
&\geq
\frac{1}{d_B} \left(1-\mathcal{S}_{L}\left(\rho_{AC}\right)\right)
+ \frac{1}{d_A} \left(1-\mathcal{S}_{L}\left(\rho_{BC}\right)\right),
\end{align}
or equivalently our claim
\begin{align}
\mathcal{S}_{L}\left(\rho_{ABC}\right)
+ \frac{1}{d_Ad_B} \mathcal{S}_{L}\left(\rho_{C}\right)
&\leq
\frac{1}{d_B}\mathcal{S}_{L}\left(\rho_{AC}\right)
+ \frac{1}{d_A} \mathcal{S}_{L}\left(\rho_{BC}\right)
+\frac{d_A d_B+1-d_A-d_B}{d_A d_B}.
\end{align}
\end{proof}
\end{theorem}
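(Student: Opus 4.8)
The plan is to translate the claimed entropy inequality into a statement about purities via the defining relation $\mathcal{S}_L(\rho) = 1 - \Tr(\rho^2)$, and then to exploit the economical expansion of each purity in terms of correlation tensor norms provided by \cref{lem:trace_squared}. Concretely, I would first expand the four relevant purities $\Tr(\rho_{ABC}^2)$, $\Tr(\rho_{AC}^2)$, $\Tr(\rho_{BC}^2)$ and $\Tr(\rho_C^2)$ in the Bloch picture. The full state yields the eight sectors indexed by the non-empty subsets of $\{A,B,C\}$, whereas the marginals only see the subsets contained in $\{A,C\}$, $\{B,C\}$ and $\{C\}$ respectively. A point worth noting before combining anything is that the lower-order correlation tensors are consistent across marginals: a local Bloch vector such as $T^A$ is computed from $\rho_A$ alone and a two-body tensor such as $T^{AC}$ from $\rho_{AC}$ alone, independently of whether these marginals are obtained from $\rho_{ABC}$ or from a smaller reduction, so the same symbols may be reused throughout.

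Next, I would form the signed combination $d_A d_C \Tr(\rho_{AC}^2) + d_B d_C \Tr(\rho_{BC}^2) - d_C \Tr(\rho_C^2)$, where each purity has been cleared of its dimensional prefactor through \cref{lem:trace_squared}. By inclusion--exclusion the subtraction of the $C$-marginal is precisely what cancels the doubly counted constant $1$ and the doubly counted sector $\norm{T^C}^2$, leaving $1 + \norm{T^A}^2 + \norm{T^B}^2 + \norm{T^C}^2 + \norm{T^{AC}}^2 + \norm{T^{BC}}^2$. Comparing with the full expansion $d_A d_B d_C \Tr(\rho_{ABC}^2) = 1 + \norm{T^A}^2 + \norm{T^B}^2 + \norm{T^C}^2 + \norm{T^{AB}}^2 + \norm{T^{AC}}^2 + \norm{T^{BC}}^2 + \norm{T^{ABC}}^2$, the two differ by exactly the non-negative terms $\norm{T^{AB}}^2 + \norm{T^{ABC}}^2$. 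Discarding these two sectors --- precisely the correlations that genuinely involve both $A$ and $B$ and hence appear in none of the chosen marginals --- therefore turns the identity into the inequality
\begin{equation}
d_A d_B d_C \Tr(\rho_{ABC}^2) \geq d_A d_C \Tr(\rho_{AC}^2) + d_B d_C \Tr(\rho_{BC}^2) - d_C \Tr(\rho_C^2).
\end{equation}

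Finally, I would divide through by $d_A d_B d_C$, substitute $\Tr(\rho^2) = 1 - \mathcal{S}_L(\rho)$ in every term, and collect the constants; the leftover numbers $\tfrac{1}{d_B} + \tfrac{1}{d_A} - \tfrac{1}{d_A d_B} = \tfrac{d_A + d_B - 1}{d_A d_B}$ combine with the $1$ coming from the left-hand side to produce exactly $\tfrac{d_A d_B + 1 - d_A - d_B}{d_A d_B}$. I expect no deep obstacle: the entire content lies in recognising which correlation-tensor sectors to throw away, namely those supported jointly on $A$ and $B$. The only steps that require genuine care are the direction of the inequality --- each substitution $\Tr(\rho^2) = 1 - \mathcal{S}_L$ carries a minus sign on the entropy, so that multiplying the rearranged purity inequality through by $-1$ flips the $\geq$ into the desired $\leq$ --- and the bookkeeping of the additive constant, where the subtracted $C$-marginal is responsible both for the cancellation in the tensor identity and for the precise value of the dimensional offset.
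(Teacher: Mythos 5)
Your proposal is correct and follows essentially the same route as the paper's own proof: expand all four purities via \cref{lem:trace_squared}, observe that the signed combination of marginal purities reproduces the full expansion minus the non-negative sectors $\norm{T^{AB}}^2 + \norm{T^{ABC}}^2$, and translate back to $\mathcal{S}_L$ with the same constant bookkeeping. If anything, your write-up is more careful than the paper's, which displays the term-dropping step as a chain of equalities (evidently a typo, since those two sectors are discarded there) where you correctly make the inequality explicit.
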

\begin{theorem}
For all $\rho_{AB}$ and the linear entropy or Tsallis 2-entropy we have

\begin{equation}
1-\frac{d_{A}d_{B}}{4}
\left(1-\mathcal{S}_{L}(\rho_{AB})+\frac{1}{d_{A}d_{B}}\right)^{2}
\leq
 S_{L}(\rho_{A})+S_{L}(\rho_{B})-S_{L}(\rho_{A})S_{L}(\rho_{B}).
\end{equation}
\end{theorem}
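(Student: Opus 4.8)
The plan is to observe that both sides of \cref{eq:nonlinqentropy} are elementary functions of purities, reduce the claim to a single inequality among $\Tr(\rho_{AB}^2)$, $\Tr(\rho_A^2)$ and $\Tr(\rho_B^2)$, and then close the gap using the Bloch decomposition of the purity (\cref{lem:trace_squared}) together with the arithmetic--geometric mean inequality.

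First I would simplify the right-hand side. Writing $\mathcal{S}_L(\rho)=1-\Tr(\rho^2)$ and expanding, one checks directly that
\begin{equation}
\mathcal{S}_{L}(\rho_{A})+\mathcal{S}_{L}(\rho_{B})-\mathcal{S}_{L}(\rho_{A})\mathcal{S}_{L}(\rho_{B})=1-\Tr(\rho_A^2)\,\Tr(\rho_B^2),
\end{equation}
which is exactly the pseudo-additivity relation \eqref{eq:Pseudo-Additivity} evaluated at $q=2$, i.e.\ the linear entropy of the product state $\rho_A\otimes\rho_B$. For the left-hand side, since $1-\mathcal{S}_L(\rho_{AB})=\Tr(\rho_{AB}^2)$, the bracketed term is simply $\Tr(\rho_{AB}^2)+\tfrac{1}{d_Ad_B}$. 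Hence the whole statement is equivalent to the purity inequality
\begin{equation}
\Tr(\rho_A^2)\,\Tr(\rho_B^2)\le\frac{d_Ad_B}{4}\left(\Tr(\rho_{AB}^2)+\frac{1}{d_Ad_B}\right)^{2}.
\end{equation}

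Next I would bring in \cref{lem:trace_squared}. From $d_A\Tr(\rho_A^2)=1+\norm{T^A}^2$, $d_B\Tr(\rho_B^2)=1+\norm{T^B}^2$ and $d_Ad_B\Tr(\rho_{AB}^2)=1+\norm{T^A}^2+\norm{T^B}^2+\norm{T^{AB}}^2$, subtraction gives the exact identity
\begin{equation}
\label{TAB0}
d_Ad_B\Tr(\rho_{AB}^2)+1=d_A\Tr(\rho_A^2)+d_B\Tr(\rho_B^2)+\norm{T^{AB}}^2.
\end{equation}
Discarding the nonnegative two-body term $\norm{T^{AB}}^2$ produces the bound $d_A\Tr(\rho_A^2)+d_B\Tr(\rho_B^2)\le d_Ad_B\Tr(\rho_{AB}^2)+1$. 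Applying AM--GM to the left-hand side, $2\sqrt{d_Ad_B\,\Tr(\rho_A^2)\Tr(\rho_B^2)}\le d_A\Tr(\rho_A^2)+d_B\Tr(\rho_B^2)$, and chaining the two bounds yields $2\sqrt{d_Ad_B\,\Tr(\rho_A^2)\Tr(\rho_B^2)}\le d_Ad_B\Tr(\rho_{AB}^2)+1$. Both sides are nonnegative, so squaring and dividing by $4d_Ad_B$ gives precisely the purity inequality above, completing the argument.

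The argument is short, so the only real subtlety---and the point to get right---is the bookkeeping of inequality directions: dropping $\norm{T^{AB}}^2$ provides an \emph{upper} bound on $d_A\Tr(\rho_A^2)+d_B\Tr(\rho_B^2)$, while AM--GM provides a \emph{lower} bound on the same quantity, and it is precisely their compatibility that lets the two chain into the squared form. I would also stress that step \eqref{TAB0} is lossy, since $\norm{T^{AB}}^2$ is simply set to zero; the resulting inequality is therefore generally not tight except when the two-body correlations vanish (e.g.\ product states, where pseudo-additivity is recovered as an equality). This is the natural place where a sharper bound on $\Tr(\rho_{AB}^2)$ could strengthen the result.
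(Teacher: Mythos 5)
Your proof is correct and takes essentially the same route as the paper's: both rest on \cref{lem:trace_squared}, discarding the nonnegative term $\norm{T^{AB}}^2$, an arithmetic--geometric mean step, and the $q=2$ pseudo-additivity identity $\mathcal{S}_{L}(\rho_{A})+\mathcal{S}_{L}(\rho_{B})-\mathcal{S}_{L}(\rho_{A})\mathcal{S}_{L}(\rho_{B})=1-\Tr(\rho_{A}^{2})\Tr(\rho_{B}^{2})$. The only cosmetic difference is that you first reduce the claim to a pure purity inequality and apply AM--GM directly to $d_{A}\Tr(\rho_{A}^{2})+d_{B}\Tr(\rho_{B}^{2})$, whereas the paper factors out $\Tr\left(\left(\rho_{A}\otimes\rho_{B}\right)^{2}\right)$ and applies AM--GM to the reciprocals, which is algebraically the same estimate.
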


\begin{proof}
\label{pr:nonlinqentropy}
We prove the desired equation by showing a kind of  ``sub-multiplicativity'' $f(\mathcal{S}_{L}(\Tr(\rho_{AB})^{2}))\leq\mathcal{S}_{L}(\Tr(\rho_{A}\otimes\rho_{B})^{2})$ of $\mathcal{S}_{L}$ under the tensor product.
For this we express $\Tr(\rho_{AB})^{2}$ in the Bloch picture. We make use of the relation
$\Tr\left(\rho_{A}^{2}\right)=\frac{1}{d_{A}}\left( 1 + \norm{T^A}^2\right)$ to obtain
\begin{align}
\Tr(\rho_{AB})^{2}
&=
\frac{1}{d_{A}d_{B}}
\left( 
    1 
    + \norm{T^A}^2 
    + \norm{T^B}^2 
    + \norm{T^{AB}}^2 
\right)
\\
\label{TAB0}
&= 
\frac{1}{d_{A}d_{B}}
\left( 
    1 
    + \left(d_{A}\Tr\left(\rho_{A}^{2}\right) 
        - 1 
        \right) 
    + \left(d_{B}\Tr\left(\rho_{B}^{2}\right) 
        - 1
        \right)
    + \norm{T^{AB}}^2 
\right)
\\
&\geq
\frac{1}{d_{A}d_{B}}
\left(
    d_{A}\Tr\left(\rho_{A}^{2}\right) 
    + d_{B}\Tr\left(\rho_{B}^{2}\right) 
    - 1   
\right) 
\\
&=
\frac{1}{d_{A}d_{B}}
\left(
    d_{A}d_{B}\Tr\left(\left(\rho_{A}\otimes\rho_{B}\right)^{2}\right) 
    \left(
        \frac{1}{d_{B}\Tr\left(\rho_{B}^{2}\right)} 
        +\frac{1}{d_{A}\Tr\left(\rho_{A}^{2}\right)} 
    \right) -1
\right)
\end{align}
Finally by the Arithmetic-Geometric mean inequality $a+b \geq 2 \sqrt{ab}$ we got
\begin{align}
\label{eq:GeneralizedPseudoAdditivity1}
\Tr(\rho_{AB}^{2}) 
&\geq
\frac{1}{d_{A}d_{B}}
\left(
    d_{A}d_{B}\Tr\left(\left(\rho_{A}\otimes\rho_{B}\right)^{2}\right)
    \left( 
        \frac{1}{d_{B}\Tr\left(\rho_{B}^{2}\right)} 
        + \frac{1}{d_{A}\Tr\left(\rho_{A}^{2}\right)}
    \right) 
    - 1
\right)
\\
&=
\frac{1}{d_{A}d_{B}}
\left(
    d_{A}d_{B}\Tr\left(\left(\rho_{A}\otimes\rho_{B}\right)^{2}\right)
    \left( 
        \frac{2}{\sqrt{d_{A}d_{B}\Tr\left(\rho_{A}^{2}\right)\Tr\left(\rho_{B}^{2}\right)}}
    \right) 
    - 1
\right)
\\
&= 
\frac{2}{\sqrt{d_{A}d_{B}}}
\sqrt{\Tr\left(\left(\rho_{A}\otimes\rho_{B}\right)^{2}\right)}
- \frac{1}{d_{A}d_{B}}
\end{align}
or alternatively $\Tr\left(\left(\rho_{A}\otimes\rho_{B}\right)^{2}\right)
\leq
\frac{d_{A}d_{B}}{4}
\left(\Tr\left(\rho_{AB}^{2}\right)
+\frac{1}{d_{A}d_{B}}\right)^{2}$.

The pseudo-additivity \cite{AbePseudoadditivity},\cite{PetzVirosztek} for $q=2$ is
\begin{align}
 S_{L}(\rho_{A})
 +S_{L}(\rho_{B})
 -S_{L}(\rho_{A})S_{L}(\rho_{B})
&=
 S_{L}(\rho_{A} \otimes \rho_{B})
=
 1-\Tr\left(\left(\rho_{A} \otimes \rho_{B}\right)^{2}\right)
\\
&
\geq
1
-\frac{d_{A}d_{B}}{4}
\left(
    \Tr\left(\rho_{AB}^{2}\right)
    +\frac{1}{d_{A}d_{B}}
\right)^{2},
\\
\shortintertext{leaving us with}
\\
S_{L}(\rho_{A})
+S_{L}(\rho_{B})
-S_{L}(\rho_{A})S_{L}(\rho_{B})
&\geq
1-\frac{d_{A}d_{B}}{4}
\left(1-\mathcal{S}_{L}(\rho_{AB})+\frac{1}{d_{A}d_{B}}\right)^{2} 
\end{align}
\end{proof}

\clearpage


\begin{thebibliography}{99}
\bibitem{ShannonEntropy}
C.E. Shannon 
and W. Weaver, 
\href{http://pubman.mpdl.mpg.de/pubman/item/escidoc:2383164/component/escidoc:2383163/Shannon_Weaver_1949_Mathematical.pdf}{The Mathematical Theory of Communication, University of Illinois Press, (1963)}
\bibitem{YeungCone}
R.W. Yeung, 
\href{http://www.cnd.mcgill.ca/~ivan/it_ineq_script/Raymond\%20Yeung\%20papers/A\%20Framework\%20for\%20Linear\%20Information\%20Inequalities\%2000641556.pdf}
{IEEE Trans. Inf. Theory \textbf{43} 6 1924–1934 (1997)}

\bibitem{PippengerCone}
N. Pippenger, 
\href{http://ieeexplore.ieee.org/abstract/document/1193790/}
{IEEE Trans. Inf. Theory
\textbf{49}
4 773–789 (2003)}
\bibitem{RenyiEntropy}
A. R\'{e}nyi, 
\href{http://biocomparison.ucoz.ru/_ld/0/37_SjS.pdf}{Proc. Fourth Berkeley Symp. on Math.Statist. and Prob., \textbf{547–561} (1960)}
\bibitem{TsallisEntropy}
C. Tsallis,
\href{https://link.springer.com/article/10.1007/BF01016429}{J.Stat.Phys, 52, \textbf{479-487},(1988)}
\bibitem{MilanRenyi}
N. Linden, M. Mosonyi and A. Winter,
\href{http://rspa.royalsocietypublishing.org/content/469/2158/20120737.short}
{Proc. R. Soc. A, (2013)}
;
\href{https://arxiv.org/pdf/1212.0248.pdf}{arXiv: 1212.0248}
\bibitem{rankineq}
J. Cadney, M. Huber, N. Linden, A. Winter,
\href{https://doi.org/10.1016/j.laa.2014.03.035}{ Lin. Alg. Appl.,vol. 452, pp. 153-171 (2014)}
 ;
\href{https://arxiv.org/pdf/1308.0539}{arXiv: 1308.0539}
\bibitem{Yamanoqlog}
T. Yamano,
\href{http://www.sciencedirect.com/science/article/pii/S0378437101005672}{
Physica A, \textbf{305}, 486-496 (2002)}
\bibitem{TsallisBook}
C. Tsallis,
\href{https://link.springer.com/book/10.1007/978-0-387-85359-8}{Introduction to Nonextensive Statistical Mechanics
Approaching a Complex World  - Springer (2009)}
\bibitem{HanelThurner}
 S. Thurner and R. Hanel,
 \href{http://aip.scitation.org/doi/abs/10.1063/1.2828761}{
AIP Conference Proceedings 965, 68 (2007)};
\href{https://arxiv.org/pdf/0709.1685.pdf}{arXiv: 0709.1685}
\bibitem{ThurnerTsallis}
 S. Thurner and C. Tsallis,
\href{http://iopscience.iop.org/article/10.1209/epl/i2005-10221-1/meta}{
EPL (Europhysics Letters), Volume 72, Number 2};
\href{https://arxiv.org/pdf/cond-mat/0506140.pdf}{arXiv: 0506140}
\bibitem{PetzVirosztek}
D Petz, D Virosztek,
\href{https://hungary.pure.elsevier.com/en/publications/some-inequalities-for-quantum-tsallis-entropy-related-to-the-stro}{Mathematical Inequalities and Applications, Vol. 18, No. 2, 01.04.2015, p. 555-568}
;
\href{https://arxiv.org/pdf/1403.7062.pdf}{arXiv-preprint: 1403.7062}
\bibitem{AudenaertQSA}
KMR. Audenaert,
\href{http://aip.scitation.org/doi/abs/10.1063/1.2771542}{Journal of Mathematical Physics, (2007)}
;
\href{https://arxiv.org/pdf/0705.1276.pdf}{arXiv: 0705.1276}
\bibitem{Terhal} B. Terhal,
\href{http://ieeexplore.ieee.org/abstract/document/5388928/?reload=true}{IBM Journal of Research and Development \textbf{48.1} (2004),}
;
\href{https://arxiv.org/pdf/quant-ph/0307120.pdf}{arXiv: 0307120}
\bibitem{LoChau} H.K. Lo and H.F. Chau,
\href{http://science.sciencemag.org/content/283/5410/2050}{Science \textbf{2050-2056}, 283.5410 (1999)}
;
\href{https://arxiv.org/pdf/quant-ph/9803006.pdf}{arXiv: 9803006}
\bibitem{ShorPreskill} P.W. Shor and J. Preskill,
\href{https://journals.aps.org/prl/abstract/10.1103/PhysRevLett.85.441}{Phys. Rev. Lett. \textbf{85}, 441 (2000)}
;
\href{https://arxiv.org/pdf/quant-ph/0003004.pdf}{arXiv: 0003004}
\bibitem{BB84} C.H. Bennett and G. Brassard,
\href{http://www.sciencedirect.com/science/article/pii/S0304397514004241}{Proceedings of the IEEE International Conference on
Computers, Systems, and Signal Processing
(IEEE Press, New York, 1984), \textbf{175-179}}
\bibitem{CKW} V. Coffman, J. Kundu and WK. Wootters,
\href{https://journals.aps.org/pra/abstract/10.1103/PhysRevA.61.052306}{Phys. Rev. A \textbf{61}, 052306  (2000)}
;
\href{https://arxiv.org/pdf/quant-ph/9907047.pdf}{arXiv: 9907047}
\bibitem{OsborneVerstraete} T.J. Osborne and F. Verstraete,
\href{https://journals.aps.org/prl/abstract/10.1103/PhysRevLett.96.220503}{Phys. Rev. Lett. \textbf{96}, 220503 (2006)}
;
\href{https://arxiv.org/pdf/quant-ph/0502176.pdf}{arXiv: 0502176}
\bibitem{Def:SquashedEntanglement} 
M. Christandl and A. Winter,
\href{http://aip.scitation.org/doi/abs/10.1063/1.1643788}{Journal of Mathematical Physics \textbf{45}, 829 (2004)}
;
\href{https://arxiv.org/pdf/quant-ph/0308088.pdf}{arXiv: 0308088}
\bibitem{KoashiWinter} 
M. Koashi and A. Winter,
\href{https://journals.aps.org/pra/abstract/10.1103/PhysRevA.69.022309}{Phys. Rev. A \textbf{69}, 022309 (2004)}
;
\href{https://arxiv.org/pdf/quant-ph/0310037.pdf}{arXiv: 0310037}
\bibitem{AdessoPiani}
A. Streltsov, G. Adesso, M. Piani and D. Bru{\ss},
\href{https://journals.aps.org/prl/abstract/10.1103/PhysRevLett.109.050503}{Phys. Rev. Lett. \textbf{109}, 050503 (2012)}
;
\href{https://arxiv.org/pdf/1112.3967v2.pdf}{arxiv: 1112.3967}
\bibitem{CounterExampleOu}
Yong-Cheng Ou,
\href{https://journals.aps.org/pra/abstract/10.1103/PhysRevA.75.034305}{Phys. Rev. A \textbf{75}, 034305 (2007)}
;
\href{https://arxiv.org/pdf/quant-ph/0612127.pdf}{arXiv: 0612127}



\bibitem{CounterExampleMarcus}
C. Lancien, S. Di Martino, M. Huber, M. Piani, G. Adesso and A. Winter,
\href{https://journals.aps.org/prl/abstract/10.1103/PhysRevLett.117.060501}{Phys. Rev. Lett. \textbf{117}, 060501 (2016)}
;
\href{https://arxiv.org/pdf/1604.02189.pdf}{arXiv: 1604.02189}
\bibitem{Gour2017}
G. Gour and Y.  Guo , 
\href{http://arxiv.org/abs/1710.03295}{(2017), arXiv-preprint: 1710.03295} 
\bibitem{klyatchko}
A. Klyachko,  
\href{https://arxiv.org/pdf/quant-ph/0409113.pdf}{(2004), arXiv-preprint: 0409113}
\bibitem{Bloch46}
  F. Bloch,
\href{http://link.aps.org/doi/10.1103/PhysRev.70.460}{Phys. Rev. \textbf{70}, 460-474 (1946)}
\bibitem{KimuraBloch}
G. Kimura,\href{https://doi.org/10.1016/S0375-9601(03)00941-1}{ 	Phys. Lett. A {\bf 31} 4, 339 (2003)};
\bibitem{KimuraBloch2} G. Kimura, A. Kossakowski, \href{https://arxiv.org/abs/quant-ph/0408014}{ 	Open Sys. Information Dyn. {\bf 12}, 207 (2005)};
\bibitem{BertlmannKrammer}
RA. Bertlmann and P. Krammer,
\href{http://iopscience.iop.org/article/10.1088/1751-8113/41/23/235303/meta}{J. Phys. A \textbf{41}, 23 (2008)}
;
\href{https://arxiv.org/pdf/0806.1174.pdf}{arXiv: 0806.1174}
\bibitem{JensAME1}
F. Huber, O. G\"uhne, and J. Siewert,
\href{https://journals.aps.org/prl/pdf/10.1103/PhysRevLett.118.200502}{Phys. Rev. Lett. \textbf{118}, 200502 (2017)}
;
\href{https://arxiv.org/pdf/1608.06228.pdf}{arXiv: 1608.06228}
\bibitem{JensAME2}
F. Huber, C. Eltschka, J. Siewert and O. G\"uhne,
\href{https://arxiv.org/pdf/1708.06298.pdf}{(2017), arXiv-preprint: 1708.06298}
\bibitem{ClaudeMarcusCortensor}
C. Kl\"ockl and M. Huber,
\href{https://journals.aps.org/pra/abstract/10.1103/PhysRevA.91.042339}{Phys. Rev. A \textbf{91}, 042339, (2015)}
;
\href{https://arxiv.org/pdf/1411.5399.pdf}{arXiv: 1411.5399}
\bibitem{HassanJoag} 
A. S. M. Hassan, P. S. Joag, \href{https://arxiv.org/abs/0704.3942}{Quant. Inf. Comp., {\bf 8}, No. 8-9 0773-0790 (2008)};
\bibitem{VicenteHuber} J. I. de Vicente, M. Huber, \href{https://doi.org/10.1103/PhysRevA.84.062306}{Phys. Rev. A {\bf 84}, 062306 (2011)};
\bibitem{NicoMartiMarci}
N. Friis, M. Huber and M. Perarnau-Llobet,
\href{https://journals.aps.org/pre/abstract/10.1103/PhysRevE.93.042135}
{Phys. Rev. E \textbf{93}, 042135 (2016)};
\href{https://arxiv.org/pdf/1511.08654.pdf}{arXiv: 1511.08654}
\bibitem{GGM}
On a historical note it is not entirely clear to us who to attribute the generalized Gell-Mann matrices to. The
d = 3 construction is due to Gell-Mann, however who is responsible for the first generalization is unclear to us.
We could trace back the mention of the generalized Gell-Mann matrices to:
F. T. Hioe and J. H. Eberly. ,
\href{http://link.aps.org/doi/10.1103/PhysRevLett.47.838}{Phys. Rev. Lett., \textbf{47} 838-841, (1981)}.
We are happy about any comments
on earlier work unknown to us.
\bibitem{Weyl27}
H. Weyl,
\href{http://dx.doi.org/10.1007/BF02055756}{Zeitschrift f{\"u}r Physik, 46(1):146,(1927) }
\bibitem{HWO}
A. Asadian, P. Erker, M. Huber and C. Kl{\"o}ckl,
\href{https://journals.aps.org/pra/abstract/10.1103/PhysRevA.94.010301}{Phys. Rev. A, \textbf{94} 010301 (R) (2016)}
;
\href{http://arxiv.org/abs/1512.05640}{arXiv: 1512.05640}
\bibitem{Def:Concurrence} W.K. Wootters,
\href{https://journals.aps.org/prl/abstract/10.1103/PhysRevLett.80.2245}{Phys. Rev. Lett. \textbf{80}, 2245 (1998)}
;
\href{https://arxiv.org/pdf/quant-ph/9709029.pdf}{arXiv: 9709029}
\bibitem{LiebRuskai}
EH. Lieb and MB. Ruskai,
\href{http://aip.scitation.org/doi/abs/10.1063/1.1666274?journalCode=jmp}{Journal of Mathematical Physics \textbf{14}, 1938 (1973)}
\bibitem{FuruichiTsallisSSA}
S. Furuichi,
\href{http://aip.scitation.org/doi/abs/10.1063/1.2165744}{Journal of Mathematical Physics \textbf{47}, 023302 (2006)}
;
\href{https://arxiv.org/pdf/cond-mat/0405600.pdf}{arXiv: 0405600}
\bibitem{ArakiLieb}
H Araki and EH Lieb,
\href{https://link.springer.com/article/10.1007\%2FBF01646092?LI=true}{Communications in Mathematical Physics
\textbf{18}, 2, 160–170 (1970)}
\bibitem{RasteginUnifiedEntropies}
A.E. Rastegin,
\href{https://link.springer.com/article/10.1007\%2Fs10955-011-0231-x?LI=true}{J. Stat. Phys. \textbf{143} 1120–1135,(2011)}
;
\href{https://arxiv.org/pdf/1012.5356.pdf}{arxiv: 1012.5356}
\bibitem{AbePseudoadditivity}
S. Abe,
\href{http://www.sciencedirect.com/science/article/pii/S0375960100003376}{Physics Letters A, Volume 271, \textbf{74-79}, (2000)}
;
\href{https://arxiv.org/pdf/cond-mat/0005538.pdf}{arXiv: 0005538}
\bibitem{JensMonogamy}
C. Eltschka and J. Siewert,
\href{https://journals.aps.org/prl/abstract/10.1103/PhysRevLett.114.140402}{Phys. Rev. Lett. \textbf{114}, 140402 (2015)}
;
\href{https://arxiv.org/pdf/1407.8195.pdf}{arXiv: 1407.8195}
\bibitem{JensStateInverter}
C. Eltschka and J. Siewert, 
\href{https://arxiv.org/pdf/1708.09639.pdf}{(2017), arXiv-preprint: 1708.09639}
\end{thebibliography}
\end{document}